\newcommand{\R}{\mathbb{R}}
\newtheorem{theorem}{Theorem}
\newtheorem{definition}{Definition}
\newtheorem{corollary}{Corollary}
\newtheorem{lemma}{Lemma}
\newtheorem{remark}{Remark}
\newtheorem{proposition}{Proposition}
\newenvironment{proof}{\medskip\noindent{\it Proof. }}{ \medskip}
\DeclareFontFamily{OT1}{pzc}{}
\DeclareFontShape{OT1}{pzc}{m}{it}{<-> s * [1.200] pzcmi7t}{}
\DeclareMathAlphabet{\mathpzc}{OT1}{pzc}{m}{it}
\newcommand*\mcapinn[2]{\vcenter{\hbox{$\mathsurround=0pt
  \ifx\displaystyle#1\textstyle\else#1\fi\bigcap$}}}
\newcommand*\mcupinn[2]{\vcenter{\hbox{$\mathsurround=0pt
  \ifx\displaystyle#1\textstyle\else#1\fi\bigcup$}}}
\def\begequarr{\begin{eqnarray}}
\def\endequarr{\end{eqnarray}}
\def\begequarrs{\begin{eqnarray*}}
\def\endequarrs{\end{eqnarray*}}
\def\begequ{\begin{equation}}
\def\endequ{\end{equation}}
\def\begequs{\begin{equation*}}
\def\endequs{\end{equation*}}
\def\begite{\begin{itemize}}
\def\endite{\end{itemize}}
\def\begcen{\begin{center}}
\def\endcen{\end{center}}
\def\begrem{\begin{remark}\rm}
\def\endrem{\end{remark}}
\def\ba{\begin{array}}
\def\ea{\end{array}}
\def\rank{\textnormal{rank}\;}
\def\col{\textnormal{col}\; }
\def\det{\textnormal{det}}
\def\d{\textnormal{d}}
\def\Real{\mathbb R}
\def\dst{\displaystyle}
\def\pdf{\textnormal{pdf}}
\def\range{\textnormal{range}}
\def\span{\textnormal{span}}
\def\col{\textnormal{col}}
\def\Pbb{\mathbb{P}}
\newcommand{\V}{\mathrm{V}}
\newcommand{\G}{\mathrm{G}}
\newcommand{\E}{\mathrm{E}}
\newcommand{\ub}{\mathbf{u}}
\newcommand{\xb}{\mathbf{x}}
\newcommand{\yb}{\mathbf{y}}
\newcommand{\zb}{\mathbf{z}}
\newcommand{\etab}{\bm{\eta}}
\newcommand{\Hb}{\mathbf{H}}
\newcommand{\Lb}{\mathbf{L}}
\newcommand{\Ib}{\mathbf{I}}
\newcommand{\Fb}{\mathbf{F}}
\newcommand{\Eb}{\mathbf{E}}
\newcommand{\Kb}{\mathbf{K}}
\newcommand{\Ab}{\mathbf{A}}
\newcommand{\Bb}{\mathbf{B}}
\newcommand{\Cb}{\mathbf{C}}
\newcommand{\Db}{\mathbf{D}}
\newcommand{\eb}{\mathbf{e}}
\newcommand{\Xb}{\mathbf{X}}
\newcommand{\Yb}{\mathbf{Y}}
\newcommand{\Gb}{\mathbf{G}}
\newcommand{\Vb}{\mathbf{V}}
\newcommand{\Wb}{\mathbf{W}}
\newcommand{\Zb}{\mathbf{Z}}
\newcommand{\omegab}{\bm{\omega}}
\newcommand{\nub}{\bm{\nu}}
\def\beeq#1{\begin{equation}{#1}\end{equation}}
\begin{document}

\title{\Huge Initial-Value Privacy  of  Linear  Dynamical Systems}
\author{Lei Wang, Ian R. Manchester, Jochen Trumpf and Guodong Shi
\thanks{*This research is supported by the Australian Research Council Discovery Project DP190103615. A preliminary work is scheduled to be presented at the 59th IEEE Conference on Decision and Control \cite{Wang-2020cdc}.}
\thanks{L. Wang, I. R. Manchester and G. Shi are with Australian Center for Field Robotics, The University of Sydney, Australia.
 (E-mail: lei.wang2; ian.manchester; guodong.shi@sydney.edu.au)}%
\thanks{ J. Trumpf is with College of Engineering and Computer Science, The Australian National University, Australia.
 (E-mail: Jochen.Trumpf@anu.edu.au)}
        }
\date{}
\maketitle

\begin{abstract}
This paper studies initial-value privacy problems of linear dynamical systems. We consider a standard linear time-invariant system with random process and measurement noises.  For such a system, eavesdroppers  having access to  system output trajectories may infer the system initial states, leading to initial-value privacy risks. When a finite number of  output trajectories are eavesdropped, we consider a requirement that any guess about the initial values  can be plausibly denied. When an infinite number of  output trajectories are  eavesdropped, we consider a requirement that the initial values should not be uniquely recoverable. In view of
these two privacy requirements, we define differential initial-value privacy and intrinsic initial-value privacy, respectively, for the system as metrics of privacy risks. First of all, we prove that the intrinsic initial-value privacy is equivalent to unobservability, while the  differential initial-value privacy can  be  achieved for a privacy budget  depending on an extended observability matrix of the system and the covariance of the noises.
Next, the inherent network nature of the considered linear system is explored, where each individual state corresponds to a node and the state and output matrices induce interaction and sensing graphs, leading to a network system. Under this network system perspective,  we allow the initial states at some nodes to be  public, and investigate the resulting intrinsic initial-value privacy of each individual node. We establish necessary and sufficient conditions for such individual node initial-value privacy, and also prove that the intrinsic initial-value privacy of individual nodes is generically determined by the network structure. These results may be  extended to linear systems with time-varying dynamics under the same analysis framework.

\end{abstract}


\section{Introduction}
The rapid developments  in networked control systems \cite{Hespanha2007}, internet of things \cite{Atzori2010,Gubbi2013}, smart grids \cite{actchallenge}, intelligent transportation \cite{Papadimitratos2009,Zhang2011} during the past decade shed lights on how future  infrastructures of our society can be made {\it smart} via interconnected sensing, dynamics, and control over cyber-physical systems. The operation of such systems inherently relies on users and subsystems sharing  signals such as  measurements, dynamical states,  control inputs in their local views, so that collective decisions become possible. The shared signals might directly contain  sensitive  information of a private nature, e.g.,   loads and currents in a  grid reflect directly activities in a residence or productions in a company \cite{Sandberg-CDC-2015}; or they might indirectly  encode physical parameters, user preferences,  economic  inclination, e.g., control inputs in economic model predictive control implicitly carry information about the system's economic objective as it is used  as the  objective function \cite{Ma-EMPC2012}.

Several privacy metrics have been developed to address privacy expectations of dynamical systems.
A notable metric is  differential privacy, which originated in computer science\cite{Dwork-1,Dwork-2,Roth-2010}. When a mechanism is applied  taking  the sensitive information as input   and producing  an output as the learning outcome, differential privacy guarantees plausible deniability of  any inference about the private information by eavesdroppers having access to the output. Differential privacy has become the canonical solutions for privacy risk characterization in  dataset processing, due to its quantitative nature and robustness to post-processing and side information \cite{Dwork-2,Roth-2010}. The differential privacy framework has also been generalized to  dynamical systems for problems ranging from  average consensus seeking  \cite{huang2012differentially} and  distributed optimization \cite{pappas2017,cortes2018tcns} to estimation and filtering  \cite{Ny-Pappas-TAC-2014,Sandberg-CDC-2015} and
feedback control \cite{Kawano-Cao-19,Hale-LQG}. Consistent with its root, differential privacy for a dynamical system  provides the system with the  ability to have plausible deniability facing eavesdroppers, e.g., recent surveys  in \cite{cortes2016cdc,HanPappas2018}.

Besides differential privacy, another related but different privacy risk lies in the possibility that an eavesdropper makes an accurate enough  estimation of the sensitive parameters or signals, perhaps from  a number of repeated  observations.   In \cite{mo2017privacy},  the variance matrix of  the maximum likelihood estimation was utilized  to measure how accurate the initial node states in a consensus network maybe estimated from the trajectories of one or more malicious nodes.  In \cite{Farokhi-2019}, a measure of privacy was developed using the inverse of the trace of the Fisher information matrix, which is a lower bound of the variance of estimation error of unbiased estimators.


In particular,  initial values of a dynamical system may carry sensitive private information, leading to privacy risks related to the initial values. For instance, when distributed load shedding in micro-grid systems is performed by employing an average consensus dynamics,   initial values represent  load demands of individual users \cite{load-shedding}.  In \cite{huang2012differentially,mo2017privacy}, the initial-value privacy of the average consensus algorithm over dynamical networks was studied, and injecting exponentially decaying noises was used as a privacy-protection approach. The privacy of the initial value for a dynamical system is also of significant theoretical interest as the   system trajectories or distributions of the system trajectories are fully parametrized by the initial value, in the presence of the plant knowledge.    In \cite{Farokhi-2019}, the initial-value privacy of a linear system was studied, and an optimal privacy-preserving policy was established for the probability density function of the additive noise such that the balance between the Fisher information-based privacy level and output performance is achieved.



In this paper, we study initial-value privacy problems of linear dynamical systems in the presence of random process and sensor noises.  For such a system, eavesdroppers  having access to  system output trajectories may infer the system initial states. When a finite number of output trajectories are eavesdropped, we consider a requirement that any guess about the initial values  can be plausibly denied. When an infinite number of  output trajectories are  eavesdropped, we consider a requirement that the initial values should not be uniquely recoverable. These requirements inspire us to define and investigate  two initial-value privacy metrics for the considered linear system: differential initial-value privacy on the plausible deniability, and intrinsic initial-value privacy on the fundamental  non-identifiability.
Next, we turn to the  inherent network nature of  linear systems, where each  dimension of the system state corresponds to a node, and the state  and output matrices induce interaction and sensing graphs. In the presence of malicious users or additional observations,   the initial states at a subset of the nodes may be known to the eavesdroppers as well.  With such a public disclosure set, the   intrinsic initial-value privacy of each individual node, and the structural privacy metric of the entire network, become  interesting and challenging questions.


 The main results  of this paper are summarized in the following:
\begin{itemize}
  \item For general linear systems, we prove that  intrinsic initial-value privacy is equivalent to unobservability; and that  differential initial-value privacy can   be achieved for a privacy budget  depending on an extended observability matrix of the system and the covariance of the noises.
  \item For networked linear systems, we   establish necessary and sufficient  conditions for intrinsic initial-value privacy of individual nodes, in the presence of a public disclosure set consisting of nodes with known initial states.  We also show  that  the network structure plays a generic role in determining the privacy of each node's initial value, and the maximally allowed number of arbitrary disclosed nodes under privacy guarantee as a network privacy index.
\end{itemize}
These results may be  extended to linear (network) systems with time-varying dynamics under the same analysis framework. The network privacy as proven to be a  generic structural property, is a generalization to  the classical structural observability results.

A preliminary version of the results is presented in \cite{Wang-2020cdc} where the technical  proofs, illustrative examples, and many  discussions  were not included.  The remainder of the paper is organized as follows. Section 2 formulates the problem of interest for linear dynamical systems. In Section 3, intrinsic initial-value privacy and differential privacy are explicitly defined and studied by regarding all initial values as a whole. Then regarding the system from the network system perspective, Section 4 analyzes the intrinsic initial-value privacy of individual nodes with a public disclosure set and studies a quantitative network privacy index, from exact and generic perspectives. Finally a brief conclusion is made in Section 5. All technical proofs are presented in the Appendix.

\medskip
\noindent
{\bf Notations}. We denote by $\R$ the real numbers and $\R^n$  the real space of $n$ dimension for any positive integer $n$. For a vector $x\in\R^n$, the norm $\|x\|=(x^\top x)^{\frac{1}{2}}$. For any $x_1,\ldots,x_m\in\R^n$, we denote $[x_1;\ldots;x_m]$ as a vector $\begin{bmatrix}
   x_1^\top &
   \ldots &
   x_m^\top
 \end{bmatrix}^\top\in\R^{mn}$, and $[x_1,\ldots,x_m]$ as a matrix of which the $i$-the column is $x_i$, $i=1,\ldots,m$. For any square matrix $A$, let $\sigma(A)$ denote the set of all eigenvalues of $A$, and $\sigma_M(A),\sigma_m(A)$ denote the maximum and minimum eigenvalues, respectively. For any matrix $A\in\R^{n\times m}$, the norm $\|A\|=\sigma_M(A^\top A)^{\frac{1}{2}}$. For any set $X\in\R^n$, we let $1_X(x)$ be a characteristic function, satisfying $1_X(x)=1$ for $x\in X$ and $1_X(x)=0$ for $x\notin X$. The range of a matrix or a function is denoted as $\range(\cdot)$, and the span of a matrix is denoted as $\span(\cdot)$. We denote $\pdf(\cdot)$ as the probability density function, and $\eb_{i}\in\R^n$ as a vector whose entries are all zero except the $i$-th being one.

\section{Problem Statement}

\subsection{Initial-Value Privacy for Linear Systems}
We consider the following linear time-invariant (LTI) system
\beeq{\label{eq:ini-sys}\ba{rcl}
\xb_{t+1} &=& \Ab\, \xb_t + \nub_t\,\\
\yb_{t} &=& \Cb\,\xb_t + \omegab_t
\ea}
for $t=0,1,\ldots$, where $\xb_t\in\mathbb{R}^n$ is state, $\yb_t\in\mathbb{R}^m$ is output,  $\nub_t\in\Real^n$ is process noise,  and  $\omegab_t\in\Real^m$ is measurement noise.
Throughout this paper, we assume that $\nub_t$ and $\omegab_t$ are random variables according to some zero-mean distributions, and $\rank(\Cb) > 0$.

In this paper, we suppose that initial values $\xb_0$ are privacy-sensitive information for the system. Eavesdroppers having access to the output trajectory $(\yb_t)_{t=0}^T$ with $T\geq n-1$ attempt to infer the private initial values.
To facilitate subsequent analysis, we  denote the measurement vector $\mathbf{Y}_t=[\yb_{T-t};\yb_{T-t+1};\ldots;\yb_T]$, the noise vectors $\mathbf{V}_t=[\nub_{T-t};\nub_{T-t+1};\ldots;\nub_{T-1}]$ and $\mathbf{W}_t=[\omegab_{T-t};\omegab_{T-t+1};\ldots;\omegab_T]$, and let
\[\ba{l}
\mathbf{O}_{\textnormal ob}= \begin{bmatrix}
                 \Cb \\
                 \Cb\Ab \\
                 \vdots \\
                 \Cb\Ab^{n-1}
               \end{bmatrix}\,,\quad
\mathbf{O}_t= \begin{bmatrix}
                 \Cb \\
                 \Cb\Ab \\
                 \vdots \\
                 \Cb\Ab^t
               \end{bmatrix}\,,\quad\\
\mathbf{H}_t = \begin{bmatrix}
          0 & 0 & \cdots & 0 & 0 \\
          \Cb & 0 & \ddots & 0 & 0 \\
          \Cb\Ab & \Cb & \ddots & 0 & 0 \\
          \vdots & \ddots & \ddots & \ddots & \vdots \\
          \Cb\Ab^{t-2} & \Cb\Ab^{t-3} & \ddots & \Cb & 0 \\
          \Cb\Ab^{t-1} & \Cb\Ab^{t-2} & \cdots & \Cb\Ab & \Cb
        \end{bmatrix}\,.
\ea\]
Here $\mathbf{O}_{\textnormal ob}$ is observability matrix, and $\mathbf{O}_t$ denotes   \emph{extended} observability matrix for $t\geq n$ and $\mathbf{H}_t$ is a lower block triangular Toeplitz matrix.
Thus, the mapping from initial state $\xb_0$ to the output trajectory $\mathbf{Y}_{T}$ as $\mathcal{M}:\mathbb{R}^n\rightarrow\mathbb{R}^{m(T+1)}$ can be  described by
\beeq{\label{eq:M}
\mathbf{Y}_T = \mathcal{M}(\xb_0) := \mathbf{O}_T\xb_0 + \mathbf{H}_T \mathbf{V}_T + \mathbf{W}_T\,.
}

The system (\ref{eq:ini-sys}) may be implemented or run independently for multiple times with the same initial state $\xb_0$. When all resulting output trajectories are eavesdropped, the eavesdropper may derive an estimate of $\xb_0$ by statistical inference methods such as maximum likelihood estimation (MLE). The resulting estimate accuracy may converge to zero as the number of eavesdropped output trajectories converges to infinity, leading to initial-value privacy risks.  In view of this, we consider a requirement that
\begin{itemize}
    \item[(R1)] the initial values should not be uniquely recoverable by an eavesdropper having an infinite number of  output trajectories.
\end{itemize}
To address the requirement (R1), we define intrinsic initial-value privacy as below.

\medskip
\begin{definition}\label{Definition:IP}
The system (\ref{eq:ini-sys}) preserves \emph{intrinsic initial-value privacy} if the initial state $\xb_0$ is statistically non-identifiable from observing $(\yb_t)_{t=0}^T$, i.e., for any $\xb_0\in\mathbb{R}^n$, there exists a $\xb^\prime_0\neq \xb_0\in\mathbb{R}^n$ such that
\begin{equation}\label{eq:IPiv}
  \pdf\left(\mathbf{Y}_T| \xb_0\right) = \pdf\left(\mathbf{Y}_T | \xb^\prime_0\right)\,.
\end{equation}
\end{definition}

In Definition \ref{Definition:IP}, equality (\ref{eq:IPiv}) indicates that there exist other values $\xb^\prime_0$, yielding the same output trajectory distribution as that of the initial value $\xb_0$. This in turn guarantees that the system preserving the intrinsic initial-value privacy satisfies the  requirement (R1).

\medskip
\begin{remark}
The intrinsic initial-value privacy guarantees the initial state $\xb_0$ indistinguishable from the $\xb^\prime_0$ satisfying (\ref{eq:IPiv}), which is related to the notion of undetectable attacks in the secure control literature, e.g. \cite{Pasqualetti-2015}, where the attacker tries to inject signals that are indistinguishable.
\end{remark}

\medskip
On the other hand, when a finite $N$ output trajectories are eavesdropped, the eavesdroppers may infer the initial values under which there is a probability of generating these output trajectories. In view of this, we consider a requirement that
\begin{itemize}
    \item[(R2)] any inference about the true initial value from the eavesdroppers can be denied by supplying any value within a range to the inference with a similar probability  of generating the eavesdropped $N$ output trajectories.
\end{itemize}
This property is referred to as plausible deniability in the literature \cite{Dwork-3}. With this in mind, we denote the  eavesdropped output trajectories as $\mathbf{Y}_T^1,\ldots,\mathbf{Y}_T^N$. The mapping from initial state $\xb_0$ to $\mathbf{Y}_T^1,\ldots,\mathbf{Y}_T^N$ is a concatenation of $N$ mappings $\mathcal{M}(\xb_0)$, i.e.,
\beeq{\label{eq:M^N}
\begin{bmatrix}
\mathbf{Y}_T^1 \cr \vdots\cr\mathbf{Y}_T^N
\end{bmatrix}
= \mathcal{M}^N(\xb_0) : =
\begin{bmatrix}
\mathcal{M}(\xb_0) \cr \vdots\cr \mathcal{M}(\xb_0)
\end{bmatrix}\,.
}
We then  define differential initial-value privacy as below \cite{Dwork-1,Dwork-2}.

\medskip
\begin{definition} \label{Definition:DP}
We define two initial values $\xb_0,\xb^\prime_0\in\mathbb{R}^n$ as $d$-adjacent if $\|\xb_0-\xb^\prime_0\|\leq d$.
The system (\ref{eq:ini-sys}) preserves \emph{$(\epsilon,\delta)$-differential privacy of initial values} for some privacy budgets $\epsilon>0, 0.5>\delta>0$ under $d$-adjacency if for all $R\subset\mbox{range}(\mathcal{M}^N)$,
\beeq{\label{eq:DPiv}
\Pbb(\mathcal{M}^N(\xb_0)\in R) \leq e^\epsilon\cdot \Pbb(\mathcal{M}^N(\xb^\prime_0)\in R) + \delta
}
holds for any two  $d$-adjacent initial values $\xb_0,\xb^\prime_0\in\mathbb{R}^n$.
\end{definition}

\medskip
In Definition \ref{Definition:DP},  inequality (\ref{eq:DPiv}) indicates that the system can plausibly deny any guess  from the eavesdroppers having $N$ output trajectories, using any value from its $d$-adjacency.
Namely, the system preserving the  differential initial-value privacy satisfies the  requirement (R2).

\medskip
\begin{remark}
The requirement (R1) indeed can be understood from a perspective of denability. Namely,
\begin{itemize}
  \item[(R1$^\prime$)] {\bf  [Deniability from non-identifiability]} any inference $\hat\xb_0$ about the true initial value from the eavesdroppers having an infinite number of output trajectories, can be denied by supplying any other value $\xb_0^\prime$ satisfying
\beeq{\label{eq:remark2}
\pdf\left(\mathbf{Y}_T| \hat\xb_0\right) = \pdf\left(\mathbf{Y}_T | \xb^\prime_0\right)\,.
}
\end{itemize}

\noindent
Note that the derivation of such $\xb_0^\prime$ needs extra computation such that (\ref{eq:remark2}) is fulfilled and the resulting $\xb_0^\prime$ may be very close to or far away from the inference $\hat\xb_0$. In contrast, the $\xb_0^\prime$ used to deny  $\hat\xb_0$ in (R2) is arbitrarily selected within a range to $\hat\xb_0$. In view of this, the plausible deniability in (R2) provides the system with a more convenient denial mechanism. On the other hand, it can be seen that (\ref{eq:remark2}) indicates that (\ref{eq:DPiv}) holds with $(\epsilon,\delta)=(0,0)$, yielding  that the eavesdroppers cannot  distinguish between $\hat\xb_0$ and $\xb_0^\prime$  with probability one.
The above analysis thus demonstrates that  intrinsic initial-value privacy and differential initial-value privacy are not inclusive mutually.

\end{remark}

\subsection{An Illustrative Example}

In this subsection, an illustrative example is presented to demonstrate the relation and practical difference between the above two types of privacy metrics.

\medskip
\noindent{\bf Example 1}. Consider system (\ref{eq:ini-sys}) with $\Ab = \begin{bmatrix}
        0 & 1 \\
        0 & -1
      \end{bmatrix}
$ and let $T=1$ and the private initial states $\xb_0=[x_{1,0};x_{2,0}]=[2;1]$. Let $\nub_t\backsim\mathcal{N}(0,\sigma_1^2\Ib_2)$ and $\omegab_t\backsim\mathcal{N}(0,\sigma_2^2)$.

\begin{itemize}
  \item[(a)] Let output $\yb_t=\Cb_1\xb_t$ with $\Cb_1 = \begin{bmatrix}
        1 & 1
      \end{bmatrix}$, and $\sigma_2=0$. We then obtain that
      \[\ba{l}
      \yb_0 = x_{1,0} + x_{2,0} \,,\\ \yb_1 = \begin{bmatrix}
        1 & 1
      \end{bmatrix}\nub_0 \,.
      \ea\]
\end{itemize}
For system with the output in Case (a), it is clear that $\pdf\left(\mathbf{Y}_T| \xb_0\right) = \pdf\left(\mathbf{Y}_T | \xb^\prime_0\right)$ for all $\xb_0^\prime=[x_{1,0}^\prime;x_{2,0}^\prime]$ satisfying $x_{1,0}^\prime+x_{2,0}^\prime=x_{1,0}+x_{2,0}$. This indicates that intrinsic initial-value privacy is preserved.

Regarding the differential privacy of initial values, we observe that $\yb_0$ is deterministic. By choosing any $\xb_0^\prime$ from $1$-adjacency of $\xb_0$ such that $x_{1,0}^\prime+x_{2,0}^\prime\neq x_{1,0}+x_{2,0}$, the inequality (\ref{eq:DPiv}) is not satisfied for any $\epsilon>0, 0.5>\delta>0$, which implies that the system does not preserve the differential privacy.


\begin{itemize}
  \item[(b)] Let output $\yb_t=\Cb_2\xb_t$ with $\Cb_2 = \begin{bmatrix}
        1 & 0
      \end{bmatrix}$, and  $\sigma_1=\sigma_2=1$. We then obtain that
      \[\ba{l}
      \yb_0 = x_{1,0} + \omegab_0\,,\quad \\ \yb_1 = x_{2,0} + \begin{bmatrix}
        1 & 0
      \end{bmatrix}\nub_0 + \omegab_1\,.
      \ea\]
\end{itemize}
If an infinite number of output trajectories were eavesdropped, then the eavesdroppers could obtain the expectations of $\yb_0,\yb_1$, denoted by $\mathbb{E}(\yb_0),\mathbb{E}(\yb_1)$. The initial values $x_{1,0},x_{2,0}$ then can be uniquely recovered by $x_{1,0}=\mathbb{E}(\yb_0)$ and $x_{2,0}=\mathbb{E}(\yb_1)$, leading to loss of intrinsic initial-value privacy. Though it is impossible to eavesdrop an infinite number of output trajectories, we note that the eavesdroppers may  obtain a large number of output trajectories, by which the initial values can be accurately inferred. To address this issue, we realize the system for $10^4$ times and use the MLE method. The resulting estimate of $\xb_0$ is $[2.0038;0.9920]$, leading to loss of the intrinsic initial-value privacy as well. In view of this, the privacy requirement (R1) and Definition \ref{Definition:IP} are of practical significance.

Regarding the differential privacy, we let $\yb_1=2,\yb_2=2$ be the eavesdropped output trajectory, and $\hat \xb_0=[1.5; 1.8]$ be a guess from the eavesdroppers. The system then denies this guess by stating that for example $\|\xb_0-\hat \xb_0\|> 0.5$. To verify whether this deny is plausible, the eavesdroppers thus compute $\pdf(\Yb_T|\xb_0)$ and $\pdf(\Yb_T|\xb_0')$ with any value $\xb_0'=[1; 1.2]$ satisfying $\|\xb_0'-\hat \xb_0\|> 0.5$, and find $\pdf(\Yb_T|\xb_0)\leq e^{0.6} \pdf(\Yb_T|\xb_0')$. In this way, the system has gained plausible deniability measured by $\delta =0.48$ and $\epsilon =0.6$. On the other hand, we randomly choose four adjacent initial values $\bar\xb_0=[1.4;1.7], \bar\xb_0^\prime=[1.6;1.8], \hat\xb_0=[1.5; 1.9], \hat\xb_0^\prime=[1.3;2]$. The resulting distributions of
$\yb_0,\yb_1$ under these initial values $\bar\xb_0, \bar\xb_0^\prime, \hat\xb_0, \hat\xb_0^\prime$ are presented in Figures \ref{fig_dp_a} and \ref{fig_dp_b}. From Figures \ref{fig_dp_a} and \ref{fig_dp_b}, it can be seen that the resulting probabilities of output trajectory $\Yb_T$ at any set are similar. This thus can imply that the system in Case (b) preserves the differential privacy of initial values.

\begin{figure}[H]
	\hspace*{0cm}
	\vspace*{0cm}
	\centering
	\includegraphics[width=6.5cm]{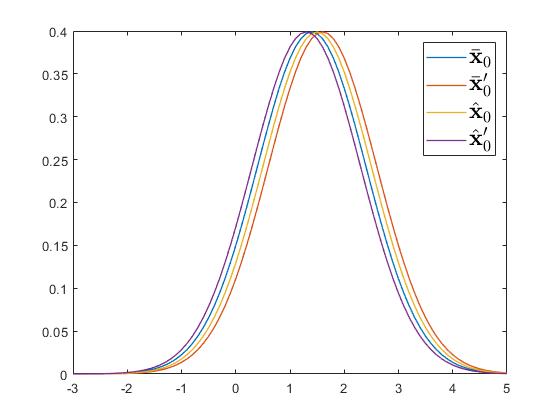}
	\caption{Distributions of $\yb_0$ under different initial values $\bar{\xb}_0, \bar{\xb}_0^\prime, \hat{\xb}_0, \hat{\xb}_0^\prime$.}
	\label{fig_dp_a}
\end{figure}

\begin{figure}[H]
	\hspace*{0cm}
	\vspace*{0cm}
	\centering
	\includegraphics[width=6.5cm]{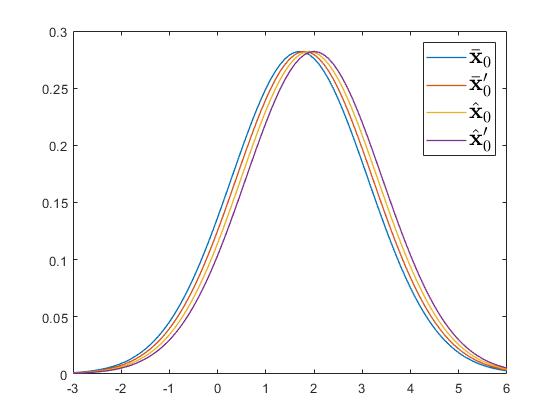}
	\caption{Distributions of $\yb_1$ under different initial values $\bar{\xb}_0, \bar{\xb}_0^\prime, \hat{\xb}_0, \hat{\xb}_0^\prime$.}
	\label{fig_dp_b}
\end{figure}

In view of the previous analysis for Cases (a) and (b), we can see that neither the intrinsic initial-value privacy implies the differential initial-value privacy, nor the differential initial-value privacy  implies the intrinsic initial-value privacy.
Therefore, both privacy metrics in Definitions 1 and 2 are mutually neither inclusive nor exclusive. $\blacksquare$

\subsection{Related Works}

Of relevance to our paper are recent works \cite{Ny-Pappas-TAC-2014,Hale-LQG,Kawano-Cao-19}. In \cite{Ny-Pappas-TAC-2014}, the considered linear dynamical systems take the form
\[\ba{rcl}
\xb_{t+1} &=& \Ab\, \xb_t + \Bb\omegab_t\,\\
\yb_{t} &=& \Cb\,\xb_t + \Db\omegab_t\,\\
\zb_t &=& \Lb\, \xb_t
\ea\]
where $\zb_t$ is a signal to be estimated, and the problem is to design a filter
\beeq{\label{eq:filter}\ba{rcl}
\hat \xb_{t+1} &=&  \Fb\, \hat \xb_t +  \Gb\, \yb_t\,\\
\hat \zb_t &=&  \Hb\, \hat \xb_t +  \Kb\, \yb_t + \nub_t
\ea}
with filter output $\hat \zb_t$, for the purpose of obtaining an optimal mean squared error between $\zb_t$ and $\hat \zb_t$, while preserving the differential privacy of system state trajectory $\Xb_T=(x_t)_{t=0}^T$, with the initial state considered in this paper as a particular case. Despite of this, we note that our framework can be extended to the scenario where $\Xb_T$ are sensitive, which will be explicitly addressed in Remark \ref{remark-rajectory-pri}.
On the other hand, in \cite{Ny-Pappas-TAC-2014}, the eavesdropped information is a filter output trajectory $\widehat{\Zb}_T=(\hat\zb_t)_{t=0}^T$, that is different from our paper where the eavesdroppers can directly measure system outputs.
To address the desired privacy concerns, \cite{Ny-Pappas-TAC-2014} studies the differential privacy of a composite mapping $\widehat{\Zb}_T=\mathcal{M}_{f}(\Xb_T):=\mathcal{M}_{2}\circ\mathcal{M}_{1}(\Xb_T)$, the mapping $\Yb_T = (\Ib_{T+1}\otimes C)\Xb_T + \Wb_T$ and the mapping $\widehat\Zb_T = \mathcal{M}_{2}(\Yb_T)$ is defined over the filter dynamics (\ref{eq:filter}). This is different mechanism compared to our mapping (\ref{eq:M^N}).

%

In \cite{Hale-LQG} a cloud-based linear quadratic regulation problem is studied for systems of the form
\[\ba{rcl}
\xb_{t+1} &=& \Ab\, \xb_t + \Bb\, \ub_t + \nub_t\,\\
\yb_{t} &=& \Cb\,\xb_t + \omegab_t\,\\
\ea\]
where the outputs $\yb_{t}$ are transmitted to the cloud for an optimal control, having the form
\[\ba{rcl}
\bar \xb_{t+1}&=& \Fb\, \bar \xb_{t} + \Gb\, \yb_t\,\\
\ub_t &=& \Hb\, \bar \xb_t\,.
\ea\]
The privacy issue of interest is to protect privacy of the state trajectory $\Xb_T$, against the eavesdroppers having access to the transmitted information consisting of $\yb_t,\ub_t$ by cyber attack.  To address the state-trajectory privacy, \cite{Hale-LQG} studies the differential privacy of the mapping  $\Yb_T =\mathcal{M}_{y}(\Xb_T):= (\Ib_{T+1}\otimes C)\Xb_T + \Wb_T$. Therefore, again this is different mechanism compared to our mapping (\ref{eq:M^N}).

In \cite{Kawano-Cao-19}, the authors consider a control system
\[\ba{rcl}
\xb_{t+1} &=& \Ab\, \xb_t + \Bb\,(\ub_t+\nub_t)\,\\
\yb_t &=& \Cb\,\xb_t +  \Db\,(\ub_t+\nub_t) + \omegab_t\,,
\ea\]
where $\nub_t,\omegab_t$ are injected input and measurement noise for privacy protection, and to achieve the desired control objective, the control input $\ub_t$ is designed as a form
\beeq{\label{eq:control}\ba{rcl}
\bar \xb_{t+1}&=& \Fb\, \bar \xb_{t} + \Gb\, \yb_t\,\\
\ub_t &=& \Hb\, \bar \xb_t + \Lb\, \yb_t\,.
\ea}
The eavesdroppers having access to an output trajectory $\Yb_T$ through cyber attack attempt to infer control inputs $\ub_t$ and initial values $\xb_0$. In this setting, \cite{Kawano-Cao-19} studies the differential privacy of a mapping from $\big((\ub_t)_{t=0}^T,x_0\big)$ to  $\Yb_T$ over the $\xb_t$-dynamics. Compared to our mapping (\ref{eq:M^N}), the considered mapping in \cite{Kawano-Cao-19} is similar in the sense that both are established over the $\xb_t$-dynamics, while is different  in the sense that there are $N$ output trajectories and no control input in our mapping (\ref{eq:M^N}).

%

In addition to the previous distinctions, we note that this paper also studies privacy issues when an infinite number of output trajectories are eavesdropped, while it is absent in \cite{Ny-Pappas-TAC-2014,Hale-LQG,Kawano-Cao-19}. This extra study benefits us to understand whether the initial-value privacy can be preserved if the eavesdroppers have obtained a large number of output trajectories.

\section{Initial-Value Privacy of General Linear Systems}


In this section, both intrinsic and differential initial-value privacy of  systems (\ref{eq:ini-sys}) are analyzed. We first present the following result on the equivalence of intrinsic initial-value privacy and unobservability.

\medskip
\begin{proposition}\label{Proposition-IP}
The system (\ref{eq:ini-sys}) preserves intrinsic initial-value privacy \emph{if and only if} $(\Ab,\Cb)$ is not observable, i.e., $\rank(\mathbf{O}_{ob})<n$.
\end{proposition}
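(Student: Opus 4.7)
The plan is to reduce the statement about probability density functions to a linear-algebraic condition on $\mathbf{O}_T$, and then use a Cayley--Hamilton argument to pass from $\mathbf{O}_T$ to $\mathbf{O}_{\textnormal{ob}}$. The key observation driving the reduction is that, by \eqref{eq:M}, the random vector $\mathbf{N}_T := \mathbf{H}_T\mathbf{V}_T + \mathbf{W}_T$ is independent of $\mathbf{x}_0$, so conditional on $\mathbf{x}_0$ the output vector $\mathbf{Y}_T$ is just a deterministic translation of $\mathbf{N}_T$ by $\mathbf{O}_T\mathbf{x}_0$. Hence
\[
\pdf(\mathbf{Y}_T\mid \mathbf{x}_0)(y) \;=\; \pdf_{\mathbf{N}_T}\bigl(y - \mathbf{O}_T \mathbf{x}_0\bigr).
\]

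First I would establish the following characterization: $\pdf(\mathbf{Y}_T\mid \mathbf{x}_0) = \pdf(\mathbf{Y}_T\mid \mathbf{x}_0')$ if and only if $\mathbf{O}_T\mathbf{x}_0 = \mathbf{O}_T\mathbf{x}_0'$. The ``if'' direction is immediate from the displayed formula. For the ``only if'' direction, equal densities imply equal means; since $\mathbf{N}_T$ has zero mean by assumption, the mean of $\mathbf{Y}_T$ given $\mathbf{x}_0$ equals $\mathbf{O}_T\mathbf{x}_0$, so equality of densities forces $\mathbf{O}_T\mathbf{x}_0 = \mathbf{O}_T\mathbf{x}_0'$. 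Intrinsic initial-value privacy from Definition~\ref{Definition:IP} is therefore equivalent to: for every $\mathbf{x}_0 \in \mathbb{R}^n$ there exists $\mathbf{x}_0' \neq \mathbf{x}_0$ with $\mathbf{O}_T(\mathbf{x}_0 - \mathbf{x}_0') = 0$, which is just the statement that $\ker(\mathbf{O}_T) \neq \{0\}$, i.e.\ $\rank(\mathbf{O}_T) < n$.

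Next I would connect $\mathbf{O}_T$ to $\mathbf{O}_{\textnormal{ob}}$ via Cayley--Hamilton. Since $\mathbf{A}^n$ lies in the $\mathbb{R}$-span of $\mathbf{I}, \mathbf{A}, \ldots, \mathbf{A}^{n-1}$, every row block $\mathbf{C}\mathbf{A}^k$ with $k \geq n$ is a linear combination of the row blocks of $\mathbf{O}_{\textnormal{ob}}$. Therefore, for any $T \geq n-1$, $\range(\mathbf{O}_T^\top) = \range(\mathbf{O}_{\textnormal{ob}}^\top)$ and in particular $\rank(\mathbf{O}_T) = \rank(\mathbf{O}_{\textnormal{ob}})$. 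Combined with the equivalence from the previous paragraph, this gives intrinsic initial-value privacy iff $\rank(\mathbf{O}_{\textnormal{ob}}) < n$, which is precisely unobservability.

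The only slightly delicate step is the ``only if'' direction of the density-equivalence characterization, where one must extract a linear relation on $\mathbf{O}_T\mathbf{x}_0$ from an identity of probability density functions. Using the zero-mean noise assumption to pass to the means of the conditional distributions sidesteps any need for stronger structural assumptions on $\pdf_{\mathbf{N}_T}$ (such as absolute continuity, strict positivity, or invertibility of a characteristic function), keeping the argument at the level stated in the excerpt. Everything else is routine linear algebra.
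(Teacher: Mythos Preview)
Your proposal is correct and follows essentially the same route as the paper: both reduce intrinsic initial-value privacy to $\ker(\mathbf{O}_T)\neq\{0\}$ and then identify $\rank(\mathbf{O}_T)=\rank(\mathbf{O}_{\textnormal{ob}})$ (the paper states this implication without spelling out Cayley--Hamilton, which you do). The only minor difference is in the necessity step: the paper writes the pseudo-inverse formula $\xb_0=(\mathbf{O}_T^\top\mathbf{O}_T)^{-1}\mathbf{O}_T^\top(\mathbf{Y}_T-\mathbf{H}_T\mathbf{V}_T-\mathbf{W}_T)$ and asserts identifiability, whereas you compare conditional means to conclude $\mathbf{O}_T\xb_0=\mathbf{O}_T\xb_0'$ directly from the pdf equality in Definition~\ref{Definition:IP}; your argument is arguably tighter, since it engages the definition head-on rather than appealing to an inversion involving the unobserved noise realizations.
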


\medskip
\begin{remark}
\label{Remark:q}
Observability has been extensively studied in the fields of estimation \cite{Clanents-1983} and feedback control \cite{Sontag}. In \cite{Kawano-Cao-19}, for a linear control system the input observability is also explored to preserve differential privacy of control inputs and initial states. In Proposition \ref{Proposition-IP}, the intrinsic initial-value privacy and observability are bridged for linear systems (\ref{eq:ini-sys}).

\end{remark}
\medskip

Next, the differential privacy of initial values for (\ref{eq:ini-sys}) is studied.
As in \cite{Ny-Pappas-TAC-2014}, we define $\mathcal{Q}(w):=\frac{1}{\sqrt{2\pi}}\int_{w}^\infty e^{-\frac{v^2}{2}}d v$, and  $\kappa (\epsilon,\delta):=\frac{\mathcal{Q}^{-1}(\delta)+\sqrt{(\mathcal{Q}^{-1}(\delta))^2+2\epsilon}}{2\epsilon}$.

\medskip
\begin{theorem}\label{Theorem-DP}
Suppose that $(\mathbf{V}_T;\mathbf{W}_T)$ are random variables according to  $(\mathbf{V}_T;\mathbf{W}_T)\backsim\mathcal{N}(0,\Sigma_{T})$.
Then  the dynamical system (\ref{eq:ini-sys}) preserves $(\epsilon,\delta)$-differential privacy of initial state under $d$-adjacency, with $\epsilon>0$ and $0.5>\delta>0$, if
\beeq{\ba{l}\label{eq:sigma}
\vspace{1mm}
\sigma_m\left(\begin{bmatrix}
        \mathbf{H}_T & \Ib_{m(T+1)}
         \end{bmatrix} \Sigma_T
         \begin{bmatrix}
        \mathbf{H}_T & \Ib_{m(T+1)}
         \end{bmatrix}^\top\right)  \geq d^2N\|\mathbf{O}_T\|^2\kappa (\epsilon,\delta)^2\,.
\ea}
\end{theorem}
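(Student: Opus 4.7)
The plan is to reduce the claim to the standard Gaussian mechanism argument, applied to the $N$-fold mapping $\mathcal{M}^N$ viewed as a linear function of $\xb_0$ plus an additive Gaussian noise. First, I would rewrite $\mathcal{M}^N(\xb_0)$ as
\[
\mathcal{M}^N(\xb_0) = (\mathbf{1}_N \otimes \mathbf{O}_T)\,\xb_0 + \mathbf{Z},
\]
where $\mathbf{Z}$ stacks $N$ independent copies of $\mathbf{H}_T \mathbf{V}_T + \mathbf{W}_T = \bigl[\mathbf{H}_T\ \Ib_{m(T+1)}\bigr]\,[\mathbf{V}_T;\mathbf{W}_T]$. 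Since the $N$ realizations of $(\mathbf{V}_T;\mathbf{W}_T)$ are i.i.d.\ $\mathcal{N}(0,\Sigma_T)$, the vector $\mathbf{Z}$ is Gaussian with block-diagonal covariance $\Sigma^{\rm tot} := \Ib_N \otimes \Sigma$, where
\[
\Sigma := \bigl[\mathbf{H}_T\ \Ib_{m(T+1)}\bigr]\,\Sigma_T\,\bigl[\mathbf{H}_T\ \Ib_{m(T+1)}\bigr]^\top.
\]
Hypothesis \eqref{eq:sigma} forces $\sigma_m(\Sigma)>0$, so $\Sigma^{\rm tot}$ is positive definite and both output distributions are genuine Gaussians on $\range(\mathcal{M}^N)$.

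Next, for any $d$-adjacent pair $\xb_0,\xb_0'$, the two laws $\mathcal{M}^N(\xb_0)$ and $\mathcal{M}^N(\xb_0')$ are Gaussians with identical covariance $\Sigma^{\rm tot}$ and mean difference $\mu := (\mathbf{1}_N \otimes \mathbf{O}_T)(\xb_0-\xb_0')$. The core tool is the standard Gaussian-mechanism lemma (used in the same form as in \cite{Ny-Pappas-TAC-2014}): two such Gaussians satisfy the $(\epsilon,\delta)$-indistinguishability \eqref{eq:DPiv} for every measurable $R$ provided
\[
\bigl\|(\Sigma^{\rm tot})^{-1/2}\mu\bigr\| \;\le\; \frac{1}{\kappa(\epsilon,\delta)},
\]
which is proved by whitening the noise, reducing to the scalar Gaussian tail bound involving $\mathcal{Q}$, and inverting the definition of $\kappa(\epsilon,\delta)$.

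It remains to upper-bound $\bigl\|(\Sigma^{\rm tot})^{-1/2}\mu\bigr\|^2$. Using the block structure $\Sigma^{\rm tot}=\Ib_N\otimes\Sigma$ and $\mu = \mathbf{1}_N \otimes \bigl(\mathbf{O}_T(\xb_0-\xb_0')\bigr)$,
\[
\bigl\|(\Sigma^{\rm tot})^{-1/2}\mu\bigr\|^2 \;=\; N\,\bigl\|\Sigma^{-1/2}\mathbf{O}_T(\xb_0-\xb_0')\bigr\|^2 \;\le\; \frac{N\,\|\mathbf{O}_T(\xb_0-\xb_0')\|^2}{\sigma_m(\Sigma)} \;\le\; \frac{N\,d^2\,\|\mathbf{O}_T\|^2}{\sigma_m(\Sigma)}.
\]
Hypothesis \eqref{eq:sigma} gives $\sigma_m(\Sigma) \ge d^2 N\|\mathbf{O}_T\|^2\kappa(\epsilon,\delta)^2$, so the right-hand side is at most $1/\kappa(\epsilon,\delta)^2$, which is exactly what the Gaussian-mechanism lemma requires. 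This yields \eqref{eq:DPiv} and completes the argument.

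The only real subtlety is the invocation of the Gaussian-mechanism lemma: one has to verify that the whitening reduction applies to the possibly rank-deficient image space $\range(\mathcal{M}^N)$, but since $\Sigma^{\rm tot}$ is positive definite under \eqref{eq:sigma}, the noise covers the whole output space and the standard proof goes through verbatim. Besides this, the argument is a routine application of linear algebra and the definitions of $\mathcal{Q}$ and $\kappa(\epsilon,\delta)$.
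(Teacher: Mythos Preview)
Your proposal is correct and follows essentially the same route as the paper. The paper writes out the Gaussian-mechanism computation explicitly---expanding the density, whitening via $(\Ib_N\otimes\Sigma^{-1/2})$, and reducing to a one-dimensional Gaussian tail bound $\mathcal{Q}(\cdot)$---whereas you package that computation as a cited lemma; but the key objects (the covariance $\Sigma=[\mathbf{H}_T\ \Ib_{m(T+1)}]\Sigma_T[\mathbf{H}_T\ \Ib_{m(T+1)}]^\top$, the whitened sensitivity $\sqrt{N}\,\|\Sigma^{-1/2}\mathbf{O}_T(\xb_0-\xb_0')\|$, and its bound by $d\sqrt{N}\|\mathbf{O}_T\|/\sqrt{\sigma_m(\Sigma)}$) and the final inversion via $\kappa(\epsilon,\delta)$ are identical.
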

\medskip

\begin{remark}
In Theorem \ref{Theorem-DP}, $\nub_t,\omegab_t$ are assumed to admit Gaussian distributions. This renders the mapping (\ref{eq:M^N}) to be a Gaussian mechanism \cite{Ny-Pappas-TAC-2014,Dwork-2}, resulting in the $(\epsilon,\delta)$-differential privacy. One may wonder if assuming  Laplacian noise $\nub_t,\omegab_t$  would lead to a stronger $(\epsilon,0)$-differential privacy, as in \cite{cortes2018tcns}. However, we note that the resulting mapping (\ref{eq:M^N}) is not a Laplace mechanism, because there is no guarantee that $\mathbf{H}_T \mathbf{V}_T + \mathbf{W}_T$ is still Laplacian, even if $\nub_t,\omegab_t$ are Laplacian variables.
\end{remark}

\medskip
\begin{remark}\label{remark-rajectory-pri}
Though only initial values of system (\ref{eq:ini-sys}) are treated as private information, we note that the results in Theorem \ref{Theorem-DP} can be extended to the case that all states $\xb_t$ are sensitive.  For dynamical systems (\ref{eq:ini-sys}), the outputs $\yb_k$ for all $k\geq t$ in $\Yb_{T}$ contain the information of $\xb_t$, rendering a mapping  from $\xb_t$ and $\Yb_{T-t}$ as
\[
\Yb_{T-t} = \mathcal{M}_t(\xb_t) := \mathbf{O}_{T-t}\xb_t + \Hb_{T-t} \Vb_{T-t} + \Wb_{T-t}\,.
\]
By combining all $\mathcal{M}_t(\xb_t)$, $t=0,1,\ldots,T$ together, one then can  establish a mapping from the state trajectory $(\xb_t)_{t=0}^T$ to the output trajectory $\Yb_T$. For such a combined mapping, following the arguments in the proof of Theorem \ref{Theorem-DP}, one then can establish $(\epsilon,\delta)$-differential privacy of the state trajectory $(\xb_t)_{t=0}^T$ with some $\epsilon>0$ and $0.5>\delta>0$. In this way, our framework can be further applied to solve the problems in \cite{Ny-Pappas-TAC-2014,Hale-LQG}, where the state trajectory is private information.

\end{remark}
\medskip

We note that given any covariance matrix $\Sigma_T>0$, there always exist $\epsilon>0$ and $0.5>\delta>0$, depending on the norm of extended observability matrix $\mathbf{O}_{T}$ such that (\ref{eq:sigma}) is satisfied, yielding the $(\epsilon,\delta)$-differential initial-value privacy. Thus, the $(\epsilon,\delta)$-differential initial-value privacy and the intrinsic initial-value privacy are mutually independent, with the latter determined  by the unobservability of systems (\ref{eq:ini-sys}), i.e., $\rank(\mathbf{O}_{ob})<n$ by Proposition \ref{Proposition-IP}.
If the noise can be designed, then there always exists a sufficiently large covariance matrix $\Sigma_{T}$ such that (\ref{eq:sigma}) holds for any privacy budgets $\varepsilon>0, 0.5>\delta>0$. To have a better view of this, we consider a particular case that $\nub_t$ and $\omegab_t$ are i.i.d. random variables. The following corollary can be easily derived by verifying the condition (\ref{eq:sigma}).

\medskip
\begin{corollary}
Suppose $\nub_t$ and $\omegab_t$, $t=0,1,\ldots,T$ are i.i.d. random variables according to $\nub_t\backsim\mathcal{N}(0,\sigma_{\nub}^2\Ib_n)$ and $\omegab_t\backsim\mathcal{N}(0,\sigma_{\omegab}^2\Ib_m)$. Then for any $\epsilon>0$, $0.5>\delta>0$, and all $\sigma_{\nub}\geq0$ and $\sigma_{\omegab} \geq d\sqrt{N}\|\mathbf{O}_T\|\kappa (\epsilon,\delta)$,
the dynamical system (\ref{eq:ini-sys}) preserves $(\epsilon,\delta)$-differential privacy of initial state under $d$-adjacency.
\end{corollary}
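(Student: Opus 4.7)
The plan is to verify condition (\ref{eq:sigma}) of Theorem \ref{Theorem-DP} under the i.i.d.\ Gaussian assumption and then invoke the theorem directly. Because $\nub_t$ and $\omegab_t$ are independent across time with the stated covariances, the joint noise vector $(\mathbf{V}_T;\mathbf{W}_T)$ is Gaussian with block-diagonal covariance
\[
\Sigma_T \;=\; \blkdiag\!\bigl(\sigma_{\nub}^{2}\Ib_{nT},\;\sigma_{\omegab}^{2}\Ib_{m(T+1)}\bigr),
\]
so in particular $(\mathbf{V}_T;\mathbf{W}_T)\sim\mathcal{N}(0,\Sigma_T)$ and Theorem \ref{Theorem-DP} applies.

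Next I would substitute this $\Sigma_T$ into the left-hand side of (\ref{eq:sigma}). The block-diagonal structure yields
\[
\begin{bmatrix}\mathbf{H}_T & \Ib_{m(T+1)}\end{bmatrix}\Sigma_T
\begin{bmatrix}\mathbf{H}_T & \Ib_{m(T+1)}\end{bmatrix}^{\top}
\;=\;\sigma_{\nub}^{2}\,\mathbf{H}_T\mathbf{H}_T^{\top}+\sigma_{\omegab}^{2}\,\Ib_{m(T+1)},
\]
which separates cleanly the contributions from process and measurement noise.

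Since $\mathbf{H}_T\mathbf{H}_T^{\top}\succeq 0$, both summands are positive semidefinite and the second one is a scalar multiple of the identity, so the minimum eigenvalue of the sum is at least $\sigma_{\omegab}^{2}$. Consequently,
\[
\sigma_m\!\Bigl(\sigma_{\nub}^{2}\,\mathbf{H}_T\mathbf{H}_T^{\top}+\sigma_{\omegab}^{2}\,\Ib_{m(T+1)}\Bigr)\;\geq\;\sigma_{\omegab}^{2}.
\]
Under the hypothesis $\sigma_{\omegab}\geq d\sqrt{N}\|\mathbf{O}_T\|\kappa(\epsilon,\delta)$, squaring gives $\sigma_{\omegab}^{2}\geq d^{2}N\|\mathbf{O}_T\|^{2}\kappa(\epsilon,\delta)^{2}$, so condition (\ref{eq:sigma}) is satisfied and Theorem \ref{Theorem-DP} yields the $(\epsilon,\delta)$-differential initial-value privacy under $d$-adjacency.

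There is no substantive obstacle here; the argument is a one-line specialization of Theorem \ref{Theorem-DP}. The only mildly subtle point worth flagging is that we intentionally discard the $\sigma_{\nub}^{2}\mathbf{H}_T\mathbf{H}_T^{\top}$ term in the eigenvalue bound, which is necessary because the first block row of $\mathbf{H}_T$ is zero and hence $\sigma_m(\mathbf{H}_T\mathbf{H}_T^{\top})=0$; this is exactly why the sufficient condition involves only $\sigma_{\omegab}$ and allows $\sigma_{\nub}\geq 0$ to be arbitrary.
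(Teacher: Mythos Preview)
Your proof is correct and follows exactly the route the paper intends: the corollary is stated as an immediate consequence of verifying condition (\ref{eq:sigma}) in Theorem \ref{Theorem-DP}, and you carry out precisely that verification by computing $\begin{bmatrix}\mathbf{H}_T & \Ib\end{bmatrix}\Sigma_T\begin{bmatrix}\mathbf{H}_T & \Ib\end{bmatrix}^{\top}=\sigma_{\nub}^{2}\mathbf{H}_T\mathbf{H}_T^{\top}+\sigma_{\omegab}^{2}\Ib$ and bounding its minimum eigenvalue below by $\sigma_{\omegab}^{2}$. Your closing remark about why the $\sigma_{\nub}$ term cannot help in this crude bound is a nice addition that the paper echoes in the remark following the corollary.
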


\medskip
\begin{remark}
  Though in Corollary 1 arbitrary $(\epsilon,\delta)$-differential privacy can be achieved by choosing a sufficiently large $\sigma_{\omegab}$, this doesn't mean that the process noise $\nub_t$ does not contribute to the differential privacy. In fact, simple calculations following the proof of Theorem \ref{Theorem-DP} can lead to a less restrictive condition as
  \[
  \|\mathbf{O}_T^\top (\sigma_{\nub}^2\mathbf{H}_T\mathbf{H}_T^\top + \sigma_{\omegab}^2\Ib_m)^{-1}\mathbf{O}_T\| \leq \frac{1}{d^2N\kappa (\epsilon,\delta)^2}\,,
  \]
  from which it can be seen that $\sigma_{\nub}$ also plays a role in achieving arbitrary differential privacy of initial values.
\end{remark}

\medskip
{
\begin{remark}
If the considered systems take a time-varying form
\beeq{\label{eq:ini-ltvsys}\ba{rcl}
\xb_{t+1} &=& \Ab_t\, \xb_t + \nub_t\,\\
\yb_{t} &=& \Cb_t\,\xb_t + \omegab_t\,
\ea}
where the state and output matrices $\Ab_t,\Cb_t$ vary as $t$ evolves, it can be easily verified that the claims in Proposition \ref{Proposition-IP} and Theorem \ref{Theorem-DP} are still preserved by replacing the  observability matrix $\mathbf{O}_{ob}$ and the extended observability matrix $\mathbf{O}_{T}$ by their time-varying version $\widehat{\mathbf{O}}:= \big[
                 \Cb_0^\top,\,
                 \Ab_0^\top \Cb_1^\top,\,
                 \cdots ,\,
                 \Ab_0^\top\cdots\Ab_{T-1}^\top \Cb_T^\top
               \big]^\top$.
\end{remark}
}

\section{Intrinsic Initial-Value Privacy of Networked Linear Systems}
\label{sec-4}

The system (\ref{eq:ini-sys}) can also be understood from a network system perspective, e.g.,\cite{Mesbahi-Egerstedt-10}. Let $x_{i,t}$ be the $i$-th entry of $\xb_t$. If each $x_{i,t}$ is viewed as the dynamical state of a node,  the matrix $\Ab$ would indicate a graph of interactions among the  nodes. If each entry of $\yb_t$ is viewed as the measurement of a sensor, then the matrix $\Cb$ would indicate a graph of interactions between the nodes and the sensors.

In view of this,  we consider a network consisting of $n$ network nodes and $m$ sensing nodes, leading to a network node set $\V=\{1,\ldots,n\}$ and a sensing node set $\mathrm{V}_{\textnormal S}=\{s_1,\ldots,s_m\}$ \footnote{To be distinguished with notations for nodes in the interaction graph $\G $, we use $s_i$ to denote the $i$-th sensing node whose measurement is $y_i$.}, respectively. Define the interaction graph $\G=(\V,\E)$ with  edge set $\E \subset \V \times \V $, and the sensing graph $\G_{\textnormal S}=(\V ,\mathrm{V}_{\textnormal S},\E_{\textnormal S})$ with edge set $\E_{\textnormal S}\subset \V\times \V_{\textnormal S} $.
Let $\Ab=[a_{ij}]\in\R^{n\times n}$ and $\Cb=[c_{ij}]\in\R^{m\times n}$.

To this end, this section aims to study how topological effects affect the privacy analysis of the networked system (\ref{eq:ini-sys}) with $(\Ab,\Cb)$ being a configuration complying with the graphs $\G, \G_{\textnormal S}$, i.e., if $a_{ij}=0$ for $(j,i)\notin\E $ and  $c_{ij}=0$ for $(j,s_i)\notin\E_{\textnormal S}$.

\medskip
\begin{remark}
For the LTV systems (\ref{eq:ini-ltvsys}), the corresponding network structure becomes time-varying,  where matrix $\Ab_t$ indicates a graph of interactions among the network nodes at time $t$ and matrix $\Cb_t$ indicates a graph of interactions between the network and sensing nodes at time $t$.
Thus we define the time-varying interaction graph $\G_t=(\V,\E_t)$ with  edge set $\E_t \subset \V \times \V $, and the sensing graph $\G_{\textnormal S,t}=(\V ,\mathrm{V}_{\textnormal S},\E_{\textnormal S,t})$ with edge set $\E_{\textnormal S,t}\subset \V\times \V_{\textnormal S}$.
\end{remark}

\subsection{Intrinsic privacy of individual initial values}

It is noted that in Definition 1 regarding intrinsic initial-value privacy, the initial state vector $\xb_0$ is considered as a whole, and we suppose that the eavesdroppers have no prior knowledge of any individual initial values. In the following, we  present several definitions that refine the notion in Definition 1 to dynamical networked systems (\ref{eq:ini-sys}) by studying intrinsic privacy of individual initial values, i.e., $x_{i,0}$, against eavesdroppers having knowledge of the whole sensor measurements (i.e., $\yb_t$) and  initial values of some network nodes.  For convenience, we term the set of nodes whose initial values are prior knowledge to eavesdroppers as a \emph{public disclosure set}.

\medskip
\begin{definition}\label{Definition:LIPP}
For any given configuration $(\Ab, \Cb)$ complying with graphs $\G,\G_{\textnormal S}$, take $i\in \V $ and let ${\mathrm{P}}\subset \V $. The networked system (\ref{eq:ini-sys}) preserves \emph{intrinsic initial-value privacy of node $i$} w.r.t. public disclosure set ${\mathrm{P}}$ if for any initial state $\xb_0\in\mathbb{R}^n$, there exists an $\xb^\prime_0=[x_{1,0}^\prime;\ldots;x_{n,0}^\prime]\in\mathbb{R}^n$ such that $x_{i,0}\neq x_{i,0}^\prime$, $x_{j,0}= x_{j,0}^\prime$ for all $j\in {\mathrm{P}}$, and
\beeq{\label{eq:RIPiv}
\pdf\left(\mathbf{Y}_T| \xb_0\right) = \pdf\left(\mathbf{Y}_T | \xb^\prime_0\right)\,.
}
\end{definition}

\medskip
\begin{remark}
  The equality (\ref{eq:RIPiv}) indicates that even if the initial values of some nodes $j\in{\mathrm{P}}$ are public, the initial value $x_{i,0}$ cannot be identified from trajectories of $\yb_t$, even with an infinite number of realizations of the dynamic networked system (\ref{eq:ini-sys}).
\end{remark}
\medskip

\begin{remark}
  If the eavesdroppers have no prior knowledge of any node initial states, the above definition is also applicable with ${\mathrm{P}}=\emptyset$. In this case, according to Proposition \ref{Proposition-IP}, the notion of intrinsic initial-value privacy of node $i$ is related to state variable unobservability of state $x_{i,t}$, that is a dual notion of state variable uncontrollability in \cite{Blackhall2010}.
\end{remark}
\medskip

Let $l=|{\mathrm{P}}|$ and ${\mathrm{P}}=\{p_1,\ldots,p_l\}\subset\V $. Define $\bar{{\mathrm{P}}}:=\{\bar p_1,\ldots,\bar p_{n-l}\}=\V \backslash{\mathrm{P}}$.
For convenience, we further let  $\Eb_{\mathrm{P}}=[\eb_{p_1},\ldots,\eb_{p_l}]\in\R^{n\times l}$ and $\Eb_{\bar{{\mathrm{P}}}}=[\eb_{\bar p_1},\ldots,\eb_{\bar p_{n-l}}]\in\R^{n\times (n-l)}$, and $\Kb_{j}^{ob}$ be the $j$-the column of matrix $\mathbf{O}_{ob}$.

\medskip
\begin{theorem}\label{Theorem-LIPP}
 Let the dynamical networked system (\ref{eq:ini-sys}) be equipped with configuration $(\Ab, \Cb)$ complying with graphs $\G ,\G_{\textnormal S}$.  Let $i\in \V $ and ${\mathrm{P}}\subset \V $ with $i\notin {\mathrm{P}}$. The following statements are equivalent.
  \begin{itemize}
    \item[a).] The networked system (\ref{eq:ini-sys}) preserves \emph{intrinsic initial-value privacy of  node $i$} w.r.t. ${\mathrm{P}}$.
        \vspace{1mm}
    \item[b).] $
  \rank(\mathbf{O}_{ob}\Eb_{\bar {{\mathrm{P}}} }) = \rank([\Kb_{i_1}^{\textnormal ob},\ldots,\Kb_{i_{n-l-1}}^{\textnormal ob}])
  $ with $\{i_1,\ldots,i_{n-l-1}\}=\V \backslash({\mathrm{P}}\cup \{i\})$.
  \vspace{1mm}
  \item[c).] $\rank\left(\begin{bmatrix} \mathbf{O}_{ob} \cr \Eb_{\mathrm{P}}^\top \cr \eb_i^\top  \end{bmatrix}\right) = \rank\left(\begin{bmatrix}\mathbf{O}_{ob} \cr \Eb_{\mathrm{P}}^\top\end{bmatrix} \right) + 1$.
  \end{itemize}
\end{theorem}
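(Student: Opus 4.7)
My plan is to prove Theorem~\ref{Theorem-LIPP} by first reducing the density-level privacy condition to a purely linear-algebraic condition on the initial-state perturbation $\deltab:=\xb_0'-\xb_0$, and then expressing that condition in the column form (b) and the row form (c). Starting from $\Yb_T=\mathbf{O}_T\xb_0+\Hb_T\Vb_T+\Wb_T$, the additive noise $\Hb_T\Vb_T+\Wb_T$ does not depend on $\xb_0$, so the equality $\pdf(\Yb_T|\xb_0)=\pdf(\Yb_T|\xb_0')$ will reduce to $\mathbf{O}_T\deltab=0$ by the same density argument that underlies Proposition~\ref{Proposition-IP}. Since $T\geq n-1$, Cayley--Hamilton forces $\Cb\Ab^k$ for $k\geq n$ into the row span of $\mathbf{O}_{ob}$, so this is further equivalent to $\mathbf{O}_{ob}\deltab=0$. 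The remaining requirements $x_{j,0}'=x_{j,0}$ for $j\in\mathrm{P}$ and $x_{i,0}'\neq x_{i,0}$ translate into $\Eb_{\mathrm{P}}^\top\deltab=0$ and $\eb_i^\top\deltab\neq 0$; since these constraints are linear and independent of $\xb_0$, the universal quantifier over $\xb_0$ in Definition~\ref{Definition:LIPP} collapses, and statement (a) becomes: \emph{there exists $\deltab\in\R^n$ with $\mathbf{O}_{ob}\deltab=0$, $\Eb_{\mathrm{P}}^\top\deltab=0$, and $\eb_i^\top\deltab\neq 0$.}

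For the equivalence (a)$\Leftrightarrow$(b), I will eliminate the support constraint $\Eb_{\mathrm{P}}^\top\deltab=0$ by writing $\deltab=\Eb_{\bar{\mathrm{P}}}\bar\deltab$ for some $\bar\deltab\in\R^{n-l}$. The kernel constraint then becomes $(\mathbf{O}_{ob}\Eb_{\bar{\mathrm{P}}})\bar\deltab=0$ and the condition $\eb_i^\top\deltab\neq 0$ becomes a non-vanishing constraint on the entry of $\bar\deltab$ indexed by $i\in\bar{\mathrm{P}}$. Such a $\bar\deltab$ exists precisely when the column of $\mathbf{O}_{ob}\Eb_{\bar{\mathrm{P}}}$ at position $i$ lies in the span of the remaining $n-l-1$ columns; with the labelling $\{i_1,\ldots,i_{n-l-1}\}=\V\setminus(\mathrm{P}\cup\{i\})$ these remaining columns are exactly $\Kb_{i_1}^{ob},\ldots,\Kb_{i_{n-l-1}}^{ob}$, yielding statement (b).

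For the equivalence (a)$\Leftrightarrow$(c), I will invoke the standard duality between kernel and row span. Setting $M:=\begin{bmatrix}\mathbf{O}_{ob}\\ \Eb_{\mathrm{P}}^\top\end{bmatrix}$, we have $\ker(M)=\ker(\mathbf{O}_{ob})\cap\ker(\Eb_{\mathrm{P}}^\top)$, so the existence of $\deltab\in\ker(M)$ with $\eb_i^\top\deltab\neq 0$ is equivalent to $\eb_i\notin\ker(M)^\perp=\range(M^\top)$, i.e.\ to $\eb_i^\top$ not lying in the row span of $M$. Since appending a single row raises the rank by at most one, and by exactly one iff the appended row lies outside the existing row span, this gives statement (c). The main obstacle is the first reduction step: one must justify that density equality collapses cleanly to $\mathbf{O}_T\deltab=0$, ruling out pathological noise laws whose density is invariant under nontrivial translations; this is precisely the point already addressed in the proof of Proposition~\ref{Proposition-IP}, after which the remainder of the argument is routine linear algebra.
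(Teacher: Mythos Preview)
Your proposal is correct and follows essentially the same route as the paper's proof: both reduce condition (a) to the existence of a perturbation $\deltab$ in $\ker(\mathbf{O}_T)=\ker(\mathbf{O}_{ob})$ supported off $\mathrm{P}$ with nonzero $i$-th entry, then read off (b) as a column-span statement and (c) as a row-span statement. The paper packages the passage from $\mathbf{O}_T$ to $\mathbf{O}_{ob}$ into a separate technical lemma (introducing intermediate statements $b^\dag$ and $c^\dag$) and obtains (c) via a block-matrix rank computation, whereas you invoke Cayley--Hamilton directly and use the duality $\ker(M)^\perp=\range(M^\top)$; this makes your write-up slightly more streamlined, but the underlying argument is the same.
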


\medskip

\begin{remark}
Recalling (\ref{eq:M}), it can be seen that each initial state $x_{j,0}$ is multiplied by the $j$-th column of the (extended) observability matrix.
We term each $j$-th column of the observability matrix as a ``feature" vector of the corresponding node $j$, and the  columns corresponding to nodes in public set ${\mathrm{P}}$ and unpublic set $\bar{\mathrm{P}}$ as public and unpublic ``feature" vectors, respectively. Then, the equivalence of statements $a)$ and $b)$ demonstrates that the intrinsic initial-value privacy of node $i$ is preserved if and only if its ``feature" vector can be expressed by a linear combination of the remainder unpublic ``feature" vectors, i.e., the $i$-th ``feature" vector is encrypted by the remainder unpublic ones.
\end{remark}

\medskip
\begin{remark}\label{Remark-8}
In Theorem \ref{Theorem-LIPP}, the equivalence of statements $a)$ and $c)$ demonstrates that the intrinsic initial-value privacy of node $i$ is preserved if and only if $\eb_i^\top$ does not belong to the $\mathrm{P}$-\emph{extended} observable subspace, denoted by $\span\left(\begin{bmatrix}\mathbf{O}_{ob} \cr \Eb_{\mathrm{P}}^\top\end{bmatrix} \right)$.
\end{remark}

\medskip
\begin{remark}\label{Remark-9}
It is worth noting that the verification of statement $c)$ can be simplified as
  \begin{itemize}
    \item[c$^\prime$).] $\rank\left(\begin{bmatrix}  \mathbf{O}_{ob} \cr \eb_i^\top \end{bmatrix}\Eb_{\bar {\mathrm{P}}}\right) = \rank\left(\mathbf{O}_{ob}\Eb_{\bar {{\mathrm{P}}} }\right) + 1$\,.
  \end{itemize}
This can be easily verified by using the following facts that
   \[\ba{l}
   \vspace{1mm}
\rank\left(\begin{bmatrix} \mathbf{O}_{ob} \cr \Eb_{\mathrm{P}}^\top \cr \eb_i^\top  \end{bmatrix}\right) = \rank\left(\begin{bmatrix}  \mathbf{O}_{ob} \cr \eb_i^\top \end{bmatrix}\Eb_{\bar {{\mathrm{P}}} }\right) + \rank(\Eb_{ {{\mathrm{P}}} }) \,\\
\rank\left(\begin{bmatrix}\mathbf{O}_{ob} \cr \Eb_{\mathrm{P}}^\top\end{bmatrix} \right)=\rank\left(\mathbf{O}_{ob}\Eb_{\bar {{\mathrm{P}}} }\right) + \rank(\Eb_{{{\mathrm{P}}} })\,.
\ea\]
In view of this, we occasionally use c$^\prime$) to replace c) in Theorem \ref{Theorem-LIPP} in the sequel.
\end{remark}
\medskip
In Theorem \ref{Theorem-LIPP}, explicit rank conditions are proposed to determine whether the intrinsic initial-value privacy of individual nodes is preserved, with respect to any given public disclosure set ${\mathrm{P}}$.   On the other hand, for a networked system, one may naturally ask what is the maximum allowable disclosure such that there always exists at least one  node whose initial-value privacy is preserved. To address this issue, the network privacy index is introduced below.

\medskip
\begin{definition}
The networked system (\ref{eq:ini-sys}) achieves level-$l$ network privacy, if for any public disclosure set ${{\mathrm{P}}}\subset \V $ with $|{{\mathrm{P}}}|=l$, there exists a node $i\in \V \backslash {{\mathrm{P}}}$ whose intrinsic initial-value privacy is preserved w.r.t. ${{\mathrm{P}}}$. The network privacy index  of (\ref{eq:ini-sys}), denoted as $\bf I_{rp}$, is defined as the maximal value of $l$ such that level-$l$ relative privacy is achieved.
\end{definition}

\medskip
\begin{proposition}\label{Proposition-NPI}
   The network privacy index of networked system (\ref{eq:ini-sys}) is ${\mathbf I_{rp}} = n-\rank(\mathbf{O}_{ob})-1$.
\end{proposition}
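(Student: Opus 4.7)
The plan is to reformulate the network privacy index in linear-algebraic terms using Theorem \ref{Theorem-LIPP}, and then establish the equality $\mathbf{I}_{rp} = n - \rank(\mathbf{O}_{ob}) - 1$ by proving a matching upper and lower bound. Let $r := \rank(\mathbf{O}_{ob})$, and for any $\mathrm{P}\subset\V$ let
\[
S_{\mathrm{P}} := \span\!\left(\begin{bmatrix}\mathbf{O}_{ob}\\ \Eb_{\mathrm{P}}^\top\end{bmatrix}\right)\subset\R^n
\]
denote the row span. By statement c) of Theorem \ref{Theorem-LIPP} (cf.\ Remark \ref{Remark-8}), intrinsic initial-value privacy of node $i$ w.r.t.\ $\mathrm{P}$ holds if and only if $\eb_i^\top\notin S_{\mathrm{P}}$. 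Note $\eb_j^\top\in S_{\mathrm{P}}$ automatically for every $j\in \mathrm{P}$, so failure of level-$l$ network privacy at disclosure set $\mathrm{P}$ (with $|\mathrm{P}|=l$) is equivalent to $S_{\mathrm{P}}=\R^n$.

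The first step, proving $\mathbf{I}_{rp}\ge n-r-1$, is the easy direction. Take any $\mathrm{P}$ with $|\mathrm{P}|=n-r-1$. Since the row space of $\mathbf{O}_{ob}$ has dimension $r$ and $\Eb_{\mathrm{P}}^\top$ contributes at most $n-r-1$ additional rows, I would bound
\[
\dim S_{\mathrm{P}} \;\le\; r + (n-r-1) \;=\; n-1 \;<\; n.
\]
Hence $S_{\mathrm{P}}\subsetneq \R^n$, so at least one standard basis row $\eb_i^\top$ does not lie in $S_{\mathrm{P}}$; such an index must belong to $\V\setminus\mathrm{P}$ (since indices in $\mathrm{P}$ are automatic members), and this $i$ witnesses level-$(n-r-1)$ network privacy.

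The second step, proving $\mathbf{I}_{rp}\le n-r-1$, requires exhibiting a \emph{bad} disclosure set $\mathrm{P}^\ast$ of size $n-r$ with $S_{\mathrm{P}^\ast}=\R^n$. I would use the standard basis-completion argument: let $W\subset\R^n$ be the row space of $\mathbf{O}_{ob}$, of dimension $r$. Because $\{\eb_1^\top,\ldots,\eb_n^\top\}$ spans $\R^n$, their images in the quotient $\R^n/W$ span this $(n-r)$-dimensional space, so by the exchange lemma one can select indices $p_1,\ldots,p_{n-r}$ such that $\eb_{p_1}^\top+W,\ldots,\eb_{p_{n-r}}^\top+W$ form a basis of $\R^n/W$. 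Setting $\mathrm{P}^\ast=\{p_1,\ldots,p_{n-r}\}$ yields $W+\span\{\eb_{p_1}^\top,\ldots,\eb_{p_{n-r}}^\top\}=\R^n$, i.e.\ $S_{\mathrm{P}^\ast}=\R^n$. Then every $\eb_i^\top$, $i\in\V\setminus\mathrm{P}^\ast$, lies in $S_{\mathrm{P}^\ast}$, violating privacy at every remaining node; hence level-$(n-r)$ network privacy fails.

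The main (and really only) subtlety is the basis-completion step: one has to be careful that the completing vectors can be chosen specifically from the standard basis rather than arbitrary vectors in $\R^n$. This is the classical Steinitz exchange principle applied to the spanning set $\{\eb_i^\top+W\}_{i=1}^n$ of $\R^n/W$, and it is what forces the exact value $n-r$ rather than something larger. Combining the two bounds then gives $\mathbf{I}_{rp}=n-\rank(\mathbf{O}_{ob})-1$, as claimed.
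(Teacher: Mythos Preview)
Your proposal is correct and follows essentially the same two-step strategy as the paper (a dimension count for the lower bound, then exhibiting a bad disclosure set of size $n-r$ for the upper bound). The only cosmetic difference is that you invoke characterization c) of Theorem \ref{Theorem-LIPP} (the row-span/extended observable subspace viewpoint) whereas the paper uses characterization b) (the column viewpoint, selecting $r$ linearly independent columns of $\mathbf{O}_{ob}$); these are dual formulations of the same linear-algebraic fact.
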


\medskip
\begin{remark}
By Definition 4,  the full initial value is not  disclosed irrespective of which ${\mathbf I_{rp}}$ nodes are public. It is clear that a larger ${\mathbf I_{rp}}$ means a stronger privacy-preservation ability of the networked system (\ref{eq:ini-sys}).
 According to Proposition \ref{Proposition-NPI}, this further implies that a networked system possesses a better privacy-preservation ability, if the dimension of its unobservable subspace (i.e., $n-\rank(\mathbf{O}_{ob})$) is higher.
\end{remark}

\medskip
\noindent{\bf Example 2}. We present an example  to illustrate Theorem \ref{Theorem-LIPP} and Proposition \ref{Proposition-NPI}.
Consider a networked system (\ref{eq:ini-sys}) with $(\Ab,\Cb)$ complying with the graphs $\G,\G_{\textnormal S}$ in Fig. \ref{fig:network-1}, consisting of 20 network nodes and 2 sensing nodes. Each edge of $\E, \E_{\textnormal S}$ is assigned with the same weight 1.

\begin{figure}[H]
	\hspace*{0cm}
	\vspace*{0cm}
	\centering
	\includegraphics[width=7.5cm]{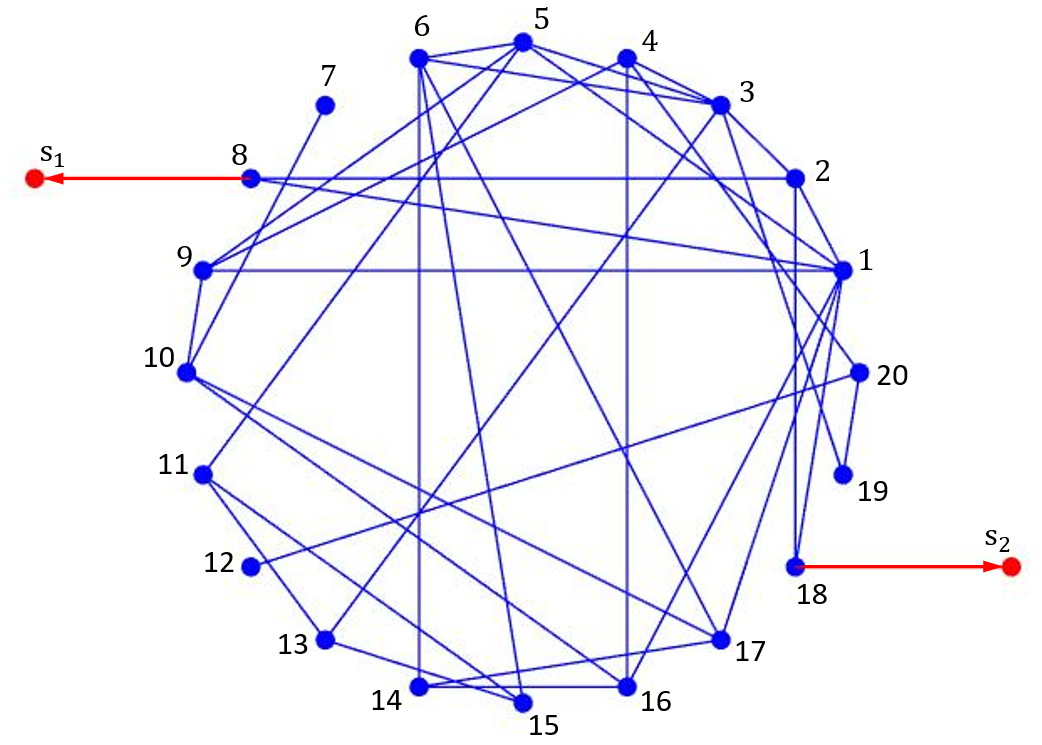}
	\caption{Network topologies $(\G,\G_{\textnormal S})$ with 20 network nodes (blue circles) and 2 sensing nodes (red circles). The edges in $\E$ and $\E_{\textnormal S}$ are drawn in blue and red lines, respectively, where the lines without arrows denote bidirectional edges.
}
	\label{fig:network-1}
\end{figure}
Firstly, suppose the eavesdroppers have no prior knowledge of  initial values of any nodes, i.e., ${\mathrm{P}}=\emptyset$. We observe that $\rank(\mathbf{O}_{ob})=17$. According to Proposition \ref{Proposition-NPI}, this indicates that the network privacy index  ${\mathbf I_{rp}} =2$. Moreover, for all $i\in \V\backslash\{8,18\}$, there holds
\[
\rank\left(\begin{bmatrix} \mathbf{O}_{ob} \cr \eb_i^\top  \end{bmatrix}\right) = 18\,.
\]
According to Theorem \ref{Theorem-LIPP}, this indicates that the networked system preserves intrinsic initial-value privacy of all $i\in \V\backslash\{8,18\}$.

Let the public disclosure set ${\mathrm{P}}=\{1,2,7\}$.  Taking into account the intrinsic initial-value privacy of individual nodes, we note that
\[
\rank\left(\begin{bmatrix} \mathbf{O}_{ob} \cr \Eb_{{\mathrm{P}}}\end{bmatrix}\right)=19\,,\quad
\rank\left(\begin{bmatrix} \mathbf{O}_{ob} \cr \Eb_{{\mathrm{P}}} \cr\eb_i^\top  \end{bmatrix}\right) = 20\,
\]
for all $i\in\{3,5,6,9,10,11,12,13,14,16,17\}$.
This verifies statement $c)$ and thus indicates that the networked system preserves intrinsic initial-value privacy of all nodes $i\in\{3,5,6,9,10,11,12,13,14,16,17\}$, even if the initial states of nodes $1,2,7$ are public. $\blacksquare$

\subsection{Generic intrinsic initial-value privacy}

In the previous subsection, the intrinsic initial-value privacy of individual nodes w.r.t.  the public disclosure set ${{\mathrm{P}}}$ of networked systems (\ref{eq:ini-sys}) is studied, and a network privacy index $\mathbf I_{rp}$ is proposed to quantify the privacy of networked system (\ref{eq:ini-sys}). In the following, we turn to study the effect of network structure $(\G ,\G_{\textnormal S})$ to the intrinsic privacy and the network privacy index. To be precise, we demonstrate that these properties are indeed generic, i.e., are fulfilled for almost all edge weights under any network structure $(\G ,\G_{\textnormal S})$.

\medskip
\begin{theorem}\label{Theorem-GIVP}
  Let ${{\mathrm{P}}}\subset \V $ and $i\in\V $. Then the intrinsic initial-value privacy of node $i$ w.r.t. ${{\mathrm{P}}}$ is generically determined by the network topology. To be precise, exactly one of the following statements holds for any non-trivial network structure $(\G ,\G_{\textnormal S})$.
  \begin{itemize}
    \item[(i)] The intrinsic initial-value privacy of node $i$ is preserved generically, i.e., for almost all configurations $(\Ab,\Cb)$ complying with the network structure $(\G ,\G_{\textnormal S})$.
    \item[(ii)] The intrinsic initial-value privacy of node $i$ is lost generically, i.e., for almost all configurations $(\Ab,\Cb)$ complying with the network structure $(\G ,\G_{\textnormal S})$.
  \end{itemize}  %
\end{theorem}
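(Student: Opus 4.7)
The plan is to reduce the dichotomy to semicontinuity of matrix rank on polynomial data, via characterization (c$'$) from Theorem~\ref{Theorem-LIPP}. That characterization says the intrinsic initial-value privacy of node $i$ w.r.t.\ ${\mathrm{P}}$ is preserved exactly when
\[
\rank\left(\begin{bmatrix} \mathbf{O}_{ob} \\ \eb_i^\top \end{bmatrix}\Eb_{\bar {\mathrm{P}}}\right) = \rank\left(\mathbf{O}_{ob}\Eb_{\bar {\mathrm{P}}}\right) + 1.
\]
The set of configurations $(\Ab,\Cb)$ complying with $(\G,\G_{\textnormal S})$ is a Euclidean space of dimension $|\E|+|\E_{\textnormal S}|$, parameterized by the free entries $a_{ij}$ with $(j,i)\in\E$ and $c_{ij}$ with $(j,s_i)\in\E_{\textnormal S}$. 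Every entry of $\mathbf{O}_{ob}=[\Cb^\top,(\Cb\Ab)^\top,\ldots,(\Cb\Ab^{n-1})^\top]^\top$ is a polynomial in these free parameters, and hence so is every minor of the two matrices above.

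Define the parameter-dependent rank functions $f:=\rank(\mathbf{O}_{ob}\Eb_{\bar {\mathrm{P}}})$ and $g:=\rank\left(\begin{bmatrix}\mathbf{O}_{ob}\\ \eb_i^\top\end{bmatrix}\Eb_{\bar {\mathrm{P}}}\right)$, and let $r_1^\ast:=\max f$, $r_2^\ast:=\max g$ (these maxima exist because rank is integer-valued and bounded). Since the rank of a polynomial matrix equals the size of its largest nonvanishing minor, the set $U_1:=\{f=r_1^\ast\}$ is the complement of the common zero locus of all $r_1^\ast\times r_1^\ast$ minors of $\mathbf{O}_{ob}\Eb_{\bar {\mathrm{P}}}$; at least one such minor is a nonzero polynomial in the free parameters (otherwise $r_1^\ast$ would not be attained), so $U_1$ is the complement of a proper algebraic subset and is therefore open and dense, with Lebesgue-null complement. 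An identical argument produces an open dense $U_2:=\{g=r_2^\ast\}$. Because appending a single row changes the rank by $0$ or $1$, one has $f\le g\le f+1$ pointwise, and consequently $r_2^\ast\in\{r_1^\ast, r_1^\ast+1\}$. Two cases are then possible. If $r_2^\ast=r_1^\ast+1$, then on the open dense intersection $U_1\cap U_2$ we have $g=r_1^\ast+1=f+1$, so by (c$'$) privacy is preserved on an open dense set of configurations, giving statement~(i). If $r_2^\ast=r_1^\ast$, then on $U_1$ one has $f=r_1^\ast=r_2^\ast\ge g\ge f$, forcing $g=f$ and hence loss of privacy on the open dense set $U_1$, giving statement~(ii). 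The cases are manifestly exhaustive and mutually exclusive.

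The only technical point requiring care is the generic-rank lemma for polynomial matrices, namely that a polynomial-valued rank function attains its maximum on an open dense subset of the parameter space. This is a short minor-determinant argument, but it is what carries ``$r_1^\ast$ is the maximum'' into ``$f\equiv r_1^\ast$ generically,'' and without it the argument does not close. Once this is in place the dichotomy follows mechanically from the pointwise bound $f\le g\le f+1$ combined with Theorem~\ref{Theorem-LIPP}(c$'$). The ``non-trivial network structure'' hypothesis merely ensures that the parameter space is non-empty so that the maxima $r_1^\ast,r_2^\ast$ are well-defined.
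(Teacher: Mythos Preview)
Your proof is correct and follows essentially the same strategy as the paper: both hinge on the generic-rank property of polynomial matrices (the paper's Lemma~\ref{lemma-1}) combined with the rank characterization (c$'$) from Theorem~\ref{Theorem-LIPP}/Remark~\ref{Remark-9}. Your direct case split on whether $r_2^\ast=r_1^\ast$ or $r_2^\ast=r_1^\ast+1$ is a cleaner organization than the paper's contradiction argument (which uses characteristic-polynomial coefficients in place of minors), but the mathematical content is the same.
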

\medskip

{
Theorem \ref{Theorem-GIVP} demonstrates that given any  network structure $(\G ,\G_{\textnormal S})$ and ${{\mathrm{P}}}\subset \V $, the intrinsic initial-value privacy of node $i$ is either preserved or lost generically. We note that if there exists a configuration $(\Ab,\Cb)$ complying with $(\G ,\G_{\textnormal S})$ such that the intrinsic initial-value privacy of node $i$ is preserved (or lost), there is no guarantee that such property is preserved (or lost) generically. This is different from other common generic properties like structural controllability \cite{Lin-1974}, for which if there exists a configuration such that a linear system is  controllable, then it must be controllable for almost {all} configurations, i.e., structurally controllable.  To have a better view of this, the following examples are formulated.
\medskip

\noindent{\bf Example 3}.
Consider the intrinsic initial-value privacy of node 1  with $(\Ab,\Cb)$ complying with the network structure in Fig. \ref{fig:network-5}. Let $T=2$, and the system output trajectory $\Yb_T$ is given by (\ref{eq:M}) with
\[
\mathbf{O}_{T}=\begin{bmatrix}
                  c_{11} & 0 & c_{13} \\
                  0 & c_{11}a_{12} & 0 \\
                  c_{11}a_{12}a_{21} & 0 & c_{11}a_{12}a_{23}
                \end{bmatrix}\,.
\]
It is clear that $\mathbf{O}_{T}$ is full-rank for almost all configurations $(\Ab,\Cb)$ complying with  Fig. \ref{fig:network-5}. Thus, for any $\xb_0,\xb_0^\prime$ with $x_{1,0}\neq x_{1,0}^\prime$,  $\mathbf{O}_{T}\xb_0 \neq \mathbf{O}_{T}\xb_0^\prime$ and thus $\pdf\left(\mathbf{Y}_T| \xb_0\right) \neq \pdf\left(\mathbf{Y}_T | \xb^\prime_0\right)$ hold for almost all configurations $(\Ab,\Cb)$ complying with  Fig. \ref{fig:network-5}. According to Definition \ref{Definition:LIPP}, this indicates that the intrinsic initial-value privacy of node 1 is lost generically.

\begin{figure}[H]
	\hspace*{0cm}
	\vspace*{0cm}
	\centering
	\includegraphics[width=3cm]{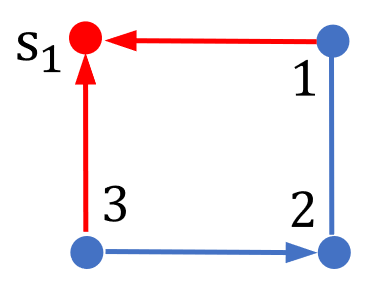}
	\caption{Network topologies $(\G,\G_{\textnormal S})$ with 3 network nodes (blue circles) and 1 sensing nodes (red circles). The line without arrow denotes a bidirectional edge.
}
	\label{fig:network-5}
\end{figure}

However, by letting the configuration $(\Ab,\Cb)$ be such that $c_{11}a_{23} = c_{13}a_{21}$, simple calculations show that
$
\pdf\left(\mathbf{Y}_T| \xb_0\right) = \pdf\left(\mathbf{Y}_T | \xb^\prime_0\right)
$
holds for any $\xb_0,\xb_0^\prime$ with $x_{1,0}\neq x_{1,0}^\prime$ and $c_{13}(x_{3,0}^\prime -x_{3,0}) = c_{11}(x_{1,0}-x_{1,0}^\prime)$. According to Definition \ref{Definition:LIPP}, this indicates that the intrinsic initial-value privacy of node 1 is preserved under the configuration $(\Ab,\Cb)$ satisfying $c_{11}a_{23} = c_{13}a_{21}$.

Thus, even if there exists a configuration such that the intrinsic initial-value privacy of node $i$ is preserved, this property may still be lost generically. This validates the statement (i) of Theorem \ref{Theorem-GIVP}. $\blacksquare$

\medskip
\noindent{\bf Example 4}. Consider the intrinsic initial-value privacy of node 4  under the network structure in Fig. \ref{fig:network-6}. Let $T=3$, and the  output trajectory $\Yb_T$ is given by (\ref{eq:M}) with
\[
\mathbf{O}_{T}=\begin{bmatrix}
                  c_{11} & 0 & c_{13} &0\\
                  c_{11}a_{11} & c_{11}a_{12} & 0 &c_{11}a_{14}\\
                  c_{11}a_{11}^2 & c_{11}a_{12}(a_{11}+a_{22}) & c_{11}a_{12}a_{23} & c_{11}a_{11}a_{14}\\
                  c_{11}a_{11}^3 & \ast & ? & c_{11}a_{11}^2a_{14}\\
                \end{bmatrix}\,
\]
with $\ast=c_{11}a_{12}(a_{11}^2+a_{11}a_{22}+a_{22}^2)$ and $?=c_{11}a_{12}a_{23}(a_{11}+a_{22})$.
Simple calculations then can show that $\pdf\left(\mathbf{Y}_T| \xb_0\right) = \pdf\left(\mathbf{Y}_T | \xb^\prime_0\right)$ holds for any $\xb_0,\xb_0^\prime$ with $x_{4,0}\neq x_{4,0}^\prime$ and  $x_{j,0}^\prime= x_{j,0}+\eta_j$ for $j=1,2,3$, where $\eta_j$'s satisfy
\beeq{\label{eq:MEqs}\ba{l}
c_{11}\eta_1 + c_{13} \eta_3 = 0\,\\
a_{11}\eta_1 + a_{12}\eta_2 = a_{14}(x_{4,0}- x_{4,0}^\prime)\,\\
a_{22}\eta_2 + a_{23}\eta_3 =0\,.
\ea}
It is clear that the above matrix equations (\ref{eq:MEqs}) have a solution for almost all configurations $(\Ab,\Cb)$ complying with  Fig. \ref{fig:network-6}, which, by Definition \ref{Definition:LIPP}, indicates that the  intrinsic initial-value privacy of node 4 is preserved generically.

\begin{figure}[H]
	\hspace*{0cm}
	\vspace*{0cm}
	\centering
	\includegraphics[width=5cm]{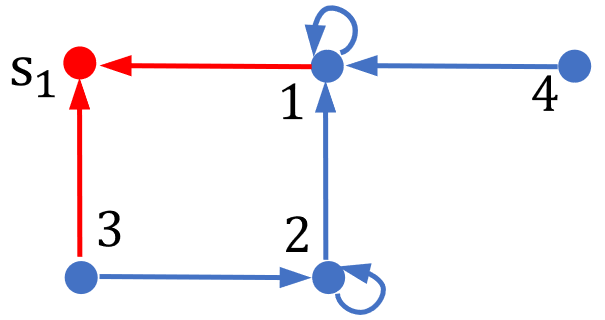}
	\caption{Network topologies $(\G,\G_{\textnormal S})$ with 4 network nodes (blue circles) and 1 sensing nodes (red circles).
}
	\label{fig:network-6}
\end{figure}
However, by letting the configuration $(\Ab,\Cb)$  be such that $c_{13}a_{11}a_{22}+c_{11}a_{12}a_{23}=0$ and $c_{11}a_{14}a_{22}\neq0$, it can be seen that there exists no $\eta_j$'s such that the matrix equations (\ref{eq:MEqs}) holds for any $x_{4,0}\neq x_{4,0}^\prime$, and
\[
(a_{22}+a_{11})\yb_1-a_{11}a_{22}\yb_0-\yb_2 = c_{11}a_{14}a_{22}x_{4,0} + g(\Wb_T,\Vb_T)
\]
with $g(\Wb_T,\Vb_T)$ being some function of noise vectors $\Wb_T,\Vb_T$.
This  immediately yields $\pdf\left(\mathbf{Y}_T| \xb_0\right) \neq \pdf\left(\mathbf{Y}_T | \xb^\prime_0\right)$  for all $\xb_0,\xb_0^\prime$ with $x_{4,0}\neq x_{4,0}^\prime$.
By Definition \ref{Definition:LIPP}, this indicates that  the intrinsic initial-value privacy of node 4 is lost under the configuration $(\Ab,\Cb)$ satisfying $c_{13}a_{11}a_{22}+c_{11}a_{12}a_{23}=0$ and $c_{11}a_{14}a_{22}\neq0$.

Thus, even if there exists a configuration such that the intrinsic initial-value privacy of node $i$ is lost, such property may still be preserved generically. This is consistent with  the statement (ii) of Theorem \ref{Theorem-GIVP}. $\blacksquare$
}

\medskip

%

Denote the maximal rank of $\mathbf{O}_{ob}\Eb_{\bar {{\mathrm{P}}} }$ over the matrix pairs $(\Ab,\Cb)$ that comply with the network structure $(\G ,\G_{\textnormal S})$ as $n_{\rm P}^{ob}$. By Lemma \ref{lemma-1} in Appendix \ref{app-theo-3}, it is noted that $\rank(\mathbf{O}_{ob}\Eb_{\bar {{\mathrm{P}}} })=n_{\rm P}^{ob}$ holds for almost all configurations $(\Ab,\Cb)$ complying with the network structure $(\G ,\G_{\textnormal S})$.  We  now present a practical verification approach of the intrinsic initial-value privacy of a node $i$ over the network structure $(\G ,\G_{\textnormal S})$.

Let the entries of $\mathbf{A}$ and $\mathbf{C}$ be generated independently and randomly according to the uniform distribution over the interval $[0,1]$, and complying with the structure  $(\G ,\G_{\textnormal S})$. We introduce the following three conditions:
\begin{itemize}
  \item[($C1$)] $\rank\left(\begin{bmatrix}  \mathbf{O}_{ob} \cr \eb_i^\top \end{bmatrix}\Eb_{\bar {{\mathrm{P}}} }\right) = n_{\rm P}^{ob} + 1$.
      \vspace{1mm}
   \item[($C2$)]  $\rank\left([\Kb_{i_1}^{\textnormal ob},\Kb_{i_2}^{\textnormal ob},\ldots,\Kb_{i_{n-l-1}}^{\textnormal ob}]\right) = n_{\rm P}^{ob}$ with $\{i_1,i_2,\ldots,i_{n-l-1}\}=\V \backslash({\mathrm{P}}\cup \{i\})$.
       \vspace{1mm}
   \item[($C3$)]  $\rank\left(\begin{bmatrix}  \mathbf{O}_{ob} \cr \Eb_{\mathrm{P}}^\top  \cr \eb_i^\top \end{bmatrix}\right) = n_{\rm P}^{ob} + l + 1.$
 \end{itemize}

\vspace{1mm}
 Denote $\mathsf{A},\mathsf{C}$ as one realized  sample from this randomization, from which we define the following event
$$
\mathcal{E}_{(\mathsf{A}, \mathsf{C})}:=\{(\mathsf{A}, \mathsf{C}): \mbox{either}\,(C1),\,(C2)\,or\,(C3)\ {\rm holds} \}.
$$
We present the following result.

\medskip
\begin{theorem}\label{Theo-verify}
The following statements hold.
\begin{itemize}
  \item[(i)] If the event  $\mathcal{E}_{(\mathsf{A}, \mathsf{C})}$ occurs, then we know with certainty (in the deterministic sense) that  intrinsic initial-value privacy of node $i$ is preserved generically over the network structure $(\G ,\G_{\textnormal S})$.
  \item[(ii)] If intrinsic initial-value privacy of node $i$ is preserved generically, then $\mathcal{E}_{(\mathsf{A}, \mathsf{C})}$
 must occur with probability one.
\end{itemize}
\end{theorem}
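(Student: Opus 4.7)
The plan is to exploit the observation that conditions (C1), (C2), (C3) are the generic-rank reformulations of the three equivalent conditions (c$'$), (b), (c) characterising intrinsic initial-value privacy in Theorem \ref{Theorem-LIPP} and Remark \ref{Remark-9}. I will rely on the fact that every entry of $\mathbf{O}_{ob}$ is a polynomial in the free entries of $\Ab$ and $\Cb$, so every minor of any relevant submatrix is itself a polynomial; consequently the rank at any specific sample is a lower bound on the generic (maximum) rank, and by Lemma 1 in the appendix the generic rank is attained on the complement of a measure-zero algebraic variety. A uniformly sampled $(\mathsf{A},\mathsf{C})$ therefore attains the generic rank of any fixed polynomial matrix with probability one.

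For part (i), I assume $\mathcal{E}_{(\mathsf{A}, \mathsf{C})}$ occurs and treat the three cases in parallel. If (C1) holds, the sample rank $n_{\rm P}^{ob}+1$ lower-bounds the generic rank of $\begin{bmatrix}\mathbf{O}_{ob} \\ \eb_i^\top\end{bmatrix}\Eb_{\bar{{\mathrm{P}}}}$, while the trivial upper bound $\rank(\mathbf{O}_{ob}\Eb_{\bar{{\mathrm{P}}}})+1=n_{\rm P}^{ob}+1$ (appending one row increases rank by at most one) forces equality generically; this is precisely condition (c$'$), so Theorem \ref{Theorem-LIPP} yields generic privacy. If (C2) holds, the same semicontinuity gives $\rank[\Kb_{i_1}^{ob},\ldots,\Kb_{i_{n-l-1}}^{ob}]=n_{\rm P}^{ob}$ generically; since these columns form a submatrix of $\mathbf{O}_{ob}\Eb_{\bar{{\mathrm{P}}}}$ this rank cannot exceed $n_{\rm P}^{ob}$, so the generic ranks of both sides of (b) coincide. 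If (C3) holds, Remark \ref{Remark-9} combined with Lemma 1 gives that the generic rank of $\begin{bmatrix}\mathbf{O}_{ob} \\ \Eb_{\mathrm{P}}^\top\end{bmatrix}$ equals $n_{\rm P}^{ob}+l$, and appending the row $\eb_i^\top$ can increase rank by at most one, so the generic rank of $\begin{bmatrix}\mathbf{O}_{ob} \\ \Eb_{\mathrm{P}}^\top \\ \eb_i^\top\end{bmatrix}$ is exactly $n_{\rm P}^{ob}+l+1$, which is condition (c) generically. In all three cases Theorem \ref{Theorem-LIPP} concludes that the intrinsic initial-value privacy of node $i$ is preserved generically.

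For part (ii), suppose generic privacy holds. Then by Theorem \ref{Theorem-LIPP} condition (c$'$) holds on a full-measure set of configurations, so the generic rank of $\begin{bmatrix}\mathbf{O}_{ob} \\ \eb_i^\top\end{bmatrix}\Eb_{\bar{{\mathrm{P}}}}$ must equal $\rank(\mathbf{O}_{ob}\Eb_{\bar{{\mathrm{P}}}})+1=n_{\rm P}^{ob}+1$. A uniformly sampled $(\mathsf{A},\mathsf{C})$ attains this generic rank with probability one, so (C1) and hence $\mathcal{E}_{(\mathsf{A}, \mathsf{C})}$ occurs almost surely.

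The main obstacle I anticipate lies in part (i): a single deterministic sample witnessing any one of (C1)-(C3) must suffice to upgrade the privacy claim to a generic statement. This works only because the sample-rank lower bound is matched by a tight structural upper bound on the generic rank of the enlarged matrix; that upper bound is supplied by Theorem \ref{Theorem-LIPP} via the equivalent forms (b), (c), (c$'$) together with the elementary fact that appending one row or one column changes the rank by at most one. If any of those upper bounds were loose, a single sample would not suffice to force the generic equality, and the argument would collapse.
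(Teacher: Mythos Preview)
Your proposal is correct and follows essentially the same approach as the paper: both rely on the polynomial/analytic nature of the entries of $\mathbf{O}_{ob}$ so that a single sample witnessing a rank lower-bounds the generic rank, combine this with the one-row upper bound and Lemma~\ref{lemma-1} to force equality, and then invoke Theorem~\ref{Theorem-LIPP}/Remark~\ref{Remark-9}. The only cosmetic difference is that the paper first observes that (C1), (C2), (C3) are mutually equivalent (via Theorem~\ref{Theorem-LIPP} and Remark~\ref{Remark-9}) and therefore proves parts (i) and (ii) for (C1) alone, whereas you handle the three conditions in parallel in part (i); your version is slightly longer but not different in substance.
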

\medskip

Theorem \ref{Theo-verify} indeed indicates a two-step approach to verify whether the intrinsic initial-value privacy of individual nodes is preserved generically. Given any set ${{\mathrm{P}}}\subset \V $ and node $i\in\V$, the first step is to compute the maximal rank $n_{\rm P}^{ob}$ of $\mathbf{O}_{ob}\Eb_{\bar {\mathrm{P}}}$. In practice, since $\rank(\mathbf{O}_{ob}\Eb_{\bar {\mathrm{P}}})=n_{\rm P}^{ob}$ holds for almost all configurations $(\Ab,\Cb)$ complying with the network structure $(\G ,\G_{\textnormal S})$, the value $n_{\rm P}^{ob}$ can be obtained by computing the maximal $\rank(\mathbf{O}_{ob}\Eb_{\bar {\mathrm{P}}})$ under a few independently and randomly generated configurations $(\Ab,\Cb)$. The second step is to verify whether the event $\mathcal{E}_{(\mathsf{A}, \mathsf{C})}$ occurs under the independently and randomly generated configurations. If the event $\mathcal{E}_{(\mathsf{A}, \mathsf{C})}$ occurs, one then conclude that the intrinsic initial-value privacy of node $i$ is preserved generically. Otherwise, if  the event $\mathcal{E}_{(\mathsf{A}, \mathsf{C})}$ does not occur,  the  intrinsic initial-value privacy of node $i$ is lost generically by Theorem \ref{Theorem-GIVP}.

\medskip
\begin{remark}
When $\mathrm{P}=\emptyset$, according to Theorem \ref{Theorem-LIPP},  the generic intrinsic initial-value privacy of node $i$  indicates that the node state $x_{i,t}$ is unobservable for almost all configurations $(\Ab,\Cb)$ complying with $(\G ,\G_{\textnormal S})$, i.e., structurally unobservable, that is an extension of the dual notion to structural state variable controllability \cite{Blackhall2010}, and  structural observability \cite{Willems-1986,Chang-1992}.
\end{remark}

\medskip
Similar to Theorem \ref{Theorem-GIVP}, the network privacy index is also generically determined by the network structure $(\G ,\G_{\textnormal S})$.
\medskip
\begin{theorem}\label{Theorem-GNPI}
    The network privacy index is generically determined by the network topology. Namely, for almost all configurations $(\Ab,\Cb)$ complying with the network structure $(\G ,\G_{\textnormal S})$, the network privacy index ${\mathbf I_{rp}} = n - n_{ob}^{g} -1$ with $n_{ob}^{g}$ given by the maximal rank of the observability matrix $\mathbf{O}_{ob}$.
\end{theorem}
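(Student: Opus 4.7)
The plan is to combine the deterministic formula of Proposition \ref{Proposition-NPI} with a standard genericity argument on the rank of a polynomial matrix. By Proposition \ref{Proposition-NPI}, for every fixed configuration $(\Ab,\Cb)$ complying with $(\G,\G_{\textnormal S})$ the network privacy index is exactly $\mathbf{I}_{rp}=n-\rank(\mathbf{O}_{ob})-1$. Hence the only thing to establish is that the function $(\Ab,\Cb)\mapsto \rank(\mathbf{O}_{ob}(\Ab,\Cb))$ attains its maximum value $n_{ob}^g$ on the complement of a Lebesgue-measure-zero subset of the space of admissible configurations.

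First I would make the parameterization explicit. The set of configurations complying with $(\G,\G_{\textnormal S})$ is a Euclidean space $\R^{|\E|+|\E_{\textnormal S}|}$ whose coordinates are the free entries $\{a_{ij}:(j,i)\in\E\}\cup\{c_{ij}:(j,s_i)\in\E_{\textnormal S}\}$; all other entries of $\Ab,\Cb$ are identically zero. Every entry of $\mathbf{O}_{ob}=[\Cb^\top,(\Cb\Ab)^\top,\ldots,(\Cb\Ab^{n-1})^\top]^\top$ is then a fixed polynomial in these free parameters, so $\mathbf{O}_{ob}$ is a polynomial matrix in a finite number of real variables.

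Next I would invoke the standard rank-genericity fact (this is precisely Lemma \ref{lemma-1} cited in Appendix \ref{app-theo-3} and the same tool used in the proof of Theorem \ref{Theorem-GIVP}): for any polynomial matrix $M(\theta)$ over $\R^d$, the set $\{\theta:\rank(M(\theta))<\max_\theta\rank(M(\theta))\}$ is the zero set of the polynomial given by the sum of squares of all maximal minors of size equal to the maximum rank, hence a proper algebraic variety and therefore of Lebesgue measure zero. Applying this to $M=\mathbf{O}_{ob}$ and defining $n_{ob}^g:=\max_{(\Ab,\Cb)}\rank(\mathbf{O}_{ob}(\Ab,\Cb))$ where the maximum is taken over all configurations complying with $(\G,\G_{\textnormal S})$, I obtain that $\rank(\mathbf{O}_{ob})=n_{ob}^g$ for almost all such configurations.

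Combining the two steps: on this full-measure set, Proposition \ref{Proposition-NPI} gives $\mathbf{I}_{rp}=n-n_{ob}^g-1$, proving the theorem. The only nontrivial point is confirming that the maximum-rank set is indeed of full measure, but this is precisely the content of the rank-genericity lemma already established in the paper, so the argument reduces to invoking that lemma and substituting into the formula of Proposition \ref{Proposition-NPI}; no new obstacle arises.
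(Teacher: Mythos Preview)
Your proposal is correct and follows essentially the same route as the paper: both arguments reduce the theorem to Proposition~\ref{Proposition-NPI} together with the fact that $\rank(\mathbf{O}_{ob})$ equals its maximal value $n_{ob}^g$ generically. The paper establishes this genericity by looking at the coefficients of the characteristic polynomial of $\mathbf{O}_{ob}^\top\mathbf{O}_{ob}$ (mirroring the proof of Lemma~\ref{lemma-1}), whereas you phrase it via minors/sum-of-squares of minors, but these are interchangeable devices for the same standard fact; note also that Lemma~\ref{lemma-1} with $\mathrm{P}=\emptyset$ (so $\Eb_{\bar{\mathrm P}}=\Ib_n$) gives exactly the statement you need.
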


\medskip
\begin{remark}
By the proof of Theorem \ref{Theorem-GNPI}, $\rank(\mathbf{O}_{ob})=n_{ob}^{g}$  holds for almost all configurations $(\Ab,\Cb)$ complying with $(\G ,\G_{\textnormal S})$.
According to \cite{Hosoe-1981} and the duality principle between controllability and observability, the $n_{ob}^{g}$ indeed is given by the maximal number of edges in the set of stem-cycle disjoint graphs \cite{Blackhall2010,Hosoe-1981}.
\end{remark}

\medskip

\begin{remark}
We remark that the results in Theorems \ref{Theorem-LIPP}--\ref{Theorem-GNPI} and Proposition \ref{Proposition-NPI}  can be easily extended to time-varying networked systems (\ref{eq:ini-ltvsys}) under time-varying graphs $(\G_t,\G_{\textnormal S,t})$, by replacing the observability matrix $\mathbf{O}_{ob}$ by its time-varying version $\widehat{\mathbf{O}}$.
\end{remark}

\medskip
\noindent{\bf Example 5}. Now an example is presented to illustrate   Theorems \ref{Theo-verify}-\ref{Theorem-GNPI}. We consider a networked system (\ref{eq:ini-sys}) with $(\Ab,\Cb)$ complying with the graphs $\G,\G_{\textnormal S}$ in Fig. \ref{fig:network-g}, consisting of 12 network nodes and 2 sensing nodes.
\begin{figure}[H]
	\hspace*{0cm}
	\vspace*{0cm}
	\centering
	\includegraphics[width=8.5cm]{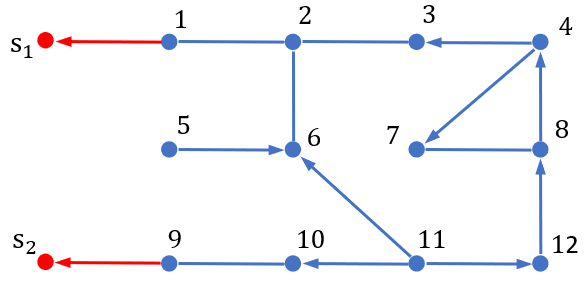}
	\caption{Network topologies $(\G,\G_{\textnormal S})$ with 12 network nodes (blue circles) and 2 sensing nodes (red circles). The lines without arrows denote bidirectional edges.
}
	\label{fig:network-g}
\end{figure}

We choose five configurations $(\mathbf{A},\mathbf{C})$ with the entries generated independently and randomly according to the uniform distribution over the interval $[0,1]$, and complying with the graph in Fig. \ref{fig:network-g}. Under these configurations, we check the corresponding ranks of matrix $\mathbf{O}_{ob}$, and find that the maximal rank is $n_{ob}^{g}=10$.
According to Theorem \ref{Theorem-GNPI}, the network privacy index is ${\mathbf I_{rp}}=1$, for almost all configurations complying with the graph in Fig. \ref{fig:network-g}. Then, for each $i\in\mathrm{V}$ by fixing all edge weights as 1,
it can be verified that
\[
\mbox{rank}\left(\begin{bmatrix} \mathbf{O}_{ob} \cr \eb_i^\top  \end{bmatrix}\right) = 11\,,\quad \mbox{for $i=3,6,7,12$}.
\]
Using Theorem \ref{Theo-verify}, this indicates that the network system preserves the intrinsic initial-value privacy of nodes $3,6,7,12$ for almost all configurations $(\Ab,\Cb)$ complying with the graph in Fig. \ref{fig:network-g}.

Let the public disclosure set ${\mathrm{P}}=\{3,6\}$.
Under the previously generated five configurations $(\mathbf{A},\mathbf{C})$, we check the corresponding ranks of matrix $\mathbf{O}_{ob} \Eb_{\bar {\mathrm{P}}}$, and find that the maximal rank is $n_{\rm P}^{ob}=9$.
Then fix all edge weights as 1 in Fig. \ref{fig:network-g}, and we can obtain
\[
\mbox{rank}\left(\begin{bmatrix} \mathbf{O}_{ob} \Eb_{\bar {\mathrm{P}}} \cr \eb_i^\top\Eb_{\bar {\mathrm{P}}} \end{bmatrix}\right) = 10\,,\quad \mbox{for $i=7,12$}.
\]
By Theorem \ref{Theo-verify}, this indicates that the network system preserves the intrinsic initial-value privacy of nodes $7,12$ for almost all configurations $(\Ab,\Cb)$ complying with the graph in Fig. \ref{fig:network-g}, under the the public disclosure set ${\mathrm{P}}=\{3,6\}$.

Furthermore, let ${\mathrm{P}}=\{3,7\}$. Similarly, we check the corresponding ranks of matrix $\mathbf{O}_{ob} \Eb_{\bar {\mathrm{P}}}$ under the previously generated five configurations $(\mathbf{A},\mathbf{C})$, and can find that  $n_{\rm P}^{ob}=10$, i.e., $\mathbf{O}_{ob} \Eb_{\bar {\mathrm{P}}} $ is full-column-rank for almost all configurations $(\Ab,\Cb)$ complying with the graph in Fig. \ref{fig:network-g}. This implies that for all $i$, there is no configuration $(\Ab,\Cb)$ such that
\[
\mbox{rank}\left(\begin{bmatrix} \mathbf{O}_{ob} \Eb_{\bar {\mathrm{P}}} \cr \eb_i^\top\Eb_{\bar {\mathrm{P}}} \end{bmatrix}\right) =  11 \,.
\]
Thus, by Theorem \ref{Theo-verify} the intrinsic initial-value privacy of all nodes is generically lost. $\blacksquare$

\medskip

\begin{remark}
  We remark that for a large-scale system (\ref{eq:ini-sys}), it is generally difficult to determine the generic intrinsic initial-value privacy of individual nodes, due to the high computational complexity to compute the maximal rank of $\mathbf{O}_{ob}\Eb_{\bar{\mathrm{P}}}$ and verify the rank conditions in (C1)-(C3). A possible solution is to find topological conditions as in \cite{Lin-1974,Julien-19}. However, the extension of ideas in \cite{Lin-1974,Julien-19} to our settings is nontrivial, particularly in the presence of a public disclosure set $\mathrm{P}$. This will be further explored in future works.

\end{remark}


\section{Conclusions}
\label{sec-6}
In this paper, we have studied the intrinsic initial-value privacy and differential initial-value privacy  of linear dynamical systems with random process and measurement noises. We proved that the intrinsic initial-value privacy is equivalent to unobservability, while the  differential initial-value privacy can be  achieved for a privacy budget  depending on an extended observability matrix of the system and the covariance of the noises.
Next, by regarding the considered linear system as a network system, we proposed  necessary and sufficient conditions on the intrinsic initial-value privacy of individual nodes, in the presence of some nodes whose initial-value privacy is public. A quantitative network privacy index was also proposed using the largest number of arbitrary public nodes such that the whole initial values are not fully exposed. In addition, we showed that both the intrinsic initial-value privacy and the network privacy index are generically determined by the network structure.  In future works, topological conditions (see e.g. \cite{Julien-19,Lin-1974}) will be explored for generic intrinsic initial-value privacy of individual nodes, and the considered privacy metrics will be utilized to develop privacy-preservation approaches for linear dynamical systems.

\appendix

\section{Proof of Proposition \ref{Proposition-IP}}

\noindent
{\emph{Sufficiency.}} If $\rank(\mathbf{O}_{ob})<n$, then $\rank(\mathbf{O}_{T})<n$ and for all $\etab\in\ker(\mathbf{O}_{T})$, we have
\begin{equation}\label{eq:IP-eta}
  \pdf\left(\mathbf{Y}_T| \xb_0\right) = \pdf\left(\mathbf{Y}_T | \xb_0 + \etab\right)\,.
\end{equation}
This, according to Definition \ref{Definition:IP}, completes the sufficiency.

\medskip
\noindent
{\emph{Necessity.}} To show the necessity part, the contradiction method is used. With the system (\ref{eq:ini-sys}) preserving intrinsic initial-value privacy, we suppose that  $\rank(\mathbf{O}_T)=n$. This implies that the mapping $\mathcal{M}(\cdot)$ is invertible, i.e.,
\[
\xb_0 =  \mathcal{M}^{-1}(\mathbf{Y}_T) := \big(\mathbf{O}_T^\top\mathbf{O}_T\big)^{-1}\mathbf{O}_T^\top\big(\mathbf{Y}_T - \mathbf{H}_T \mathbf{V}_T - \mathbf{W}_T\big)\,.
\]
It is clear that $\xb_0$ is identifiable from $\mathbf{Y}_T$, i.e., the initial value $\xb_0$ of the system (\ref{eq:ini-sys}) is not private.
This contradicts with the fact that the system (\ref{eq:ini-sys}) preserves intrinsic initial-value privacy. Therefore, it can be concluded that $\rank(\mathbf{O}_T)<n$. This completes the proof.

\section{Proof of Theorem \ref{Theorem-DP}}

Let $\mathbf{V}_T^i, \mathbf{W}_T^i$ be the process and measurement noise vectors, respectively for the $i$-th implementation. Define $\zb^i:=\mathbf{H}_T \mathbf{V}_T^i + \mathbf{W}_T^i$ and $\zb:=[\zb^1;\zb^2;\ldots;\zb^N]$. Hence, $\zb\backsim \mathcal{N}(0,\Ib_N\otimes\Sigma)$ with
\[
\Sigma:=\begin{bmatrix}
                                       \mathbf{H}_T & \Ib_{m(T+1)}
                                     \end{bmatrix} \Sigma_T
                                     \begin{bmatrix}
                                       \mathbf{H}_T & \Ib_{m(T+1)}
                                     \end{bmatrix}^\top\,.
                                     \]
As in  \cite{Ny-Pappas-TAC-2014}, with the mapping $\mathcal{M}^N$ defined in (\ref{eq:M^N}), for any $R\subset\mbox{range}(\mathcal{M}^N)$ and $\epsilon>0$, we have
\[\ba{l}
\vspace{2mm}
\Pbb(\mathcal{M}^N(\xb_0)\in R) = \Pbb\big((\mathbf{1}_N\otimes\mathbf{O}_T)\xb_0 + \zb\in R\big)\,\\
\vspace{2mm}
\overset{a)}{=}(2\pi)^{-\frac{Nm(T+1)}{2}}\det(\Sigma)^{-\frac{N}{2}}\dst\int_{R}\exp\bigg(-\frac{1}{2}\|(\Ib_N\otimes\Sigma^{-\frac{1}{2}})(u-(\mathbf{1}_N\otimes\mathbf{O}_T)\xb_0)\|^2\bigg)\,\d u\,\\
\vspace{2mm}
\overset{b)}{=} \exp(\epsilon)\Pbb(\mathcal{M}^N(\xb_0^\prime)\in R) + (2\pi)^{-\frac{Nm\,(T+1)}{2}}\det(\Sigma)^{-\frac{N}{2}}\dst\int_{R}\exp\left(-\frac{1}{2}\|(\Ib_N\otimes\Sigma^{-\frac{1}{2}})(u-(\mathbf{1}_N\otimes\mathbf{O}_T)\xb_0)\|^2\right)\cdot \\
\vspace{2mm}
\qquad \cdot \left(1-\exp\bigg(\epsilon - (u-(\mathbf{1}_N\otimes\mathbf{O}_T)\xb_0)^\top(\mathbf{1}_N\otimes\Sigma^{-1}\mathbf{O}_T)\tilde\xb_0-\frac{N}{2}\tilde\xb_0^\top\mathbf{O}_T^\top\Sigma^{-1}\mathbf{O}_T\tilde\xb_0\bigg)\right)\,\d u\,\\
\vspace{2mm}
\overset{c)}{\leq} \exp(\epsilon)\Pbb(\mathcal{M}^N(\xb_0^\prime)\in R) + (2\pi)^{-\frac{Nm\,(T+1)}{2}}\dst\int_{\R^{Nm(T+1)}}1_{ v^T(\mathbf{1}_N\otimes\Sigma^{-\frac{1}{2}}\mathbf{O}_T)\tilde\xb_0 \geq \epsilon- \frac{N}{2}\|\Sigma^{-\frac{1}{2}}\mathbf{O}_T\tilde\xb_0\|^2} \exp\big(-\frac{\|v\|^2}{2}\big)\,\d v\,\\
\vspace{2mm}
\overset{d)}{\leq} \exp(\epsilon)\Pbb(\mathcal{M}^N(\xb_0^\prime)\in R)  + \Pbb\bigg( \tilde\xb_0^\top \left(\mathbf{1}_N^\top\otimes\mathbf{O}_T^\top \Sigma^{-\frac{1}{2}}\right)v \geq \epsilon- \frac{N}{2}\|\Sigma^{-\frac{1}{2}}\mathbf{O}_T\tilde\xb_0\|^2\bigg)\,
\ea\]
where $a)$ is obtained by setting $u=(\mathbf{1}_N^\top\otimes\mathbf{O}_T)\xb_0 + \zb$, $b)$ is obtained by adding and subtracting the term $\exp(\epsilon)\Pbb(\mathcal{M}^N(\xb_0^\prime)\in R)$, and setting $\tilde\xb_0=\xb_0-\xb^\prime_0$,  $c)$ is obtained by choosing $R=\R^{Nm(T+1)}$ and setting $v=(\Ib_{N}\otimes\Sigma^{-\frac{1}{2}})[u-(\mathbf{1}_N\otimes\mathbf{O}_T)\xb_0]$ with $v\backsim \mathcal{N}(0,\Ib_{Nm(T+1)})$, and d) is derived by using the fact that $\tilde\xb_0^\top\left(\mathbf{1}_N^\top\otimes\mathbf{O}_T^\top \Sigma^{-\frac{1}{2}}\right)v\backsim \mathcal{N}\big(0,N\|\Sigma^{-\frac{1}{2}}\mathbf{O}_T\tilde\xb_0\|^2\big)$. It is noted that
\[\ba{rcl}
\vspace{2mm}
&&\Pbb\bigg( \tilde\xb_0^\top \left(\mathbf{1}_N^\top\otimes\mathbf{O}_T^\top \Sigma^{-\frac{1}{2}}\right)v \geq \epsilon- \frac{N}{2}\|\Sigma^{-\frac{1}{2}}\mathbf{O}_T\tilde\xb_0\|^2\bigg) \\
&\leq& \Pbb\bigg( Z \geq \frac{\epsilon\sqrt{\sigma_m(\Sigma)}}{d\sqrt{N}\|\mathbf{O}_T\|}- \frac{d\sqrt{N}\|\mathbf{O}_T\|}{2\sqrt{\sigma_m(\Sigma)}}\bigg)\,\\
\ea
\]
with $Z\backsim \mathcal{N}(0,1)$.

Thus, it can be concluded that the $(\epsilon,\delta)$-differential privacy is preserved if $\epsilon,\delta$ satisfy
\beeq{\label{eq:delta}
\delta \geq \mathcal{Q}\bigg(\frac{\epsilon\sqrt{\sigma_m(\Sigma)}}{d\sqrt{N}\|\mathbf{O}_T\|}- \frac{d\sqrt{N}\|\mathbf{O}_T\|}{2\sqrt{\sigma_m(\Sigma)}}\bigg)\,\,.
}
With $\mathcal{Q}(w)$ being a strictly decreasing smooth function, it is clear that (\ref{eq:delta}) is equivalent to
\[
\frac{\epsilon\sigma_m(\Sigma)}{d\sqrt{N}\|\mathbf{O}_T\|}- \mathcal{Q}^{-1}(\delta)\sqrt{\sigma_m(\Sigma)}- \frac{d\sqrt{N}\|\mathbf{O}_T\|}{2} \geq 0\,,
\]
which is fulfilled if (\ref{eq:sigma}) holds.
The proof is thus completed.

\section{Proof of Theorem \ref{Theorem-LIPP}}
\label{app-sec-theo-2}

Let $\Kb_{j}$ be the $j$-the column of matrix $\mathbf{O}_T$. Fundamental to the proof is the following technical lemma.
\begin{lemma}
Statement b) and c) in Theorem \ref{Theorem-LIPP} are respectively equivalent to the following b$^\dag$) and c$^\dag$).
  \begin{itemize}
    \item[b$^\dag$).] $\rank([\Kb_{\bar p_1},\ldots,\Kb_{\bar p_{n-l}}]) = \rank([\Kb_{i_1},\ldots,\Kb_{i_{n-l-1}}])
  $ with $\{i_1,\ldots,i_{n-l-1}\}=\V \backslash({\mathrm{P}}\cup \{i\})$.
  \vspace{2mm}
  \item[c$^{\dag}$).] $\rank\left(\begin{bmatrix}  \mathbf{O}_{T}\cr \eb_i^\top \end{bmatrix}\Eb_{\bar {{\mathrm{P}}} }\right) = \rank\left(\mathbf{O}_{T}\Eb_{\bar {{\mathrm{P}}} }\right) + 1$.
  \end{itemize}
\end{lemma}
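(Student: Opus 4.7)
The plan is to exploit the Cayley-Hamilton theorem to relate the extended observability matrix $\mathbf{O}_T$ (with $T\geq n-1$) to the classical observability matrix $\mathbf{O}_{ob}$, and then derive both equivalences from a single rank-preservation identity. The key structural observation is that each power $\Ab^k$ with $k\geq n$ lies in the span of $\Ab^0,\ldots,\Ab^{n-1}$, so every block row $\Cb\Ab^k$ of $\mathbf{O}_T$ with $k\geq n$ is a linear combination of the block rows of $\mathbf{O}_{ob}$. This immediately produces a factorization $\mathbf{O}_T = \Pi\,\mathbf{O}_{ob}$ with $\Pi = \begin{bmatrix}\Ib_{mn} \\ L\end{bmatrix}\in\R^{m(T+1)\times mn}$, where $L$ collects the Cayley-Hamilton coefficients and the leading identity block ensures $\Pi$ has full column rank.

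Applying the rank-preservation principle, namely that left-multiplication by a full-column-rank matrix preserves the rank of a product (since $\ker(\Pi M)=\ker(M)$), I obtain $\rank(\mathbf{O}_T E) = \rank(\mathbf{O}_{ob} E)$ for every matrix $E$ with $n$ rows. To establish b) $\Leftrightarrow$ b$^\dag$), I would apply this identity twice: once with $E=\Eb_{\bar{\mathrm{P}}}$, using that $[\Kb_{\bar p_1},\ldots,\Kb_{\bar p_{n-l}}] = \mathbf{O}_T\Eb_{\bar{\mathrm{P}}}$ and $[\Kb^{ob}_{\bar p_1},\ldots,\Kb^{ob}_{\bar p_{n-l}}] = \mathbf{O}_{ob}\Eb_{\bar{\mathrm{P}}}$; and once with $E$ equal to the column-selector for the index set $\{i_1,\ldots,i_{n-l-1}\}$. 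The two resulting rank equalities transfer the identity in b$^\dag$) (which concerns columns of $\mathbf{O}_T$) into the identity in b) (which concerns columns of $\mathbf{O}_{ob}$), and vice versa, giving the desired equivalence.

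For c) $\Leftrightarrow$ c$^\dag$), I would first use the two rank identities listed in Remark \ref{Remark-9} to rewrite c) in the equivalent form c$^\prime$): $\rank\left(\begin{bmatrix}\mathbf{O}_{ob} \\ \eb_i^\top\end{bmatrix}\Eb_{\bar{\mathrm{P}}}\right) = \rank(\mathbf{O}_{ob}\Eb_{\bar{\mathrm{P}}}) + 1$. Then I would augment the previous factorization by an extra row, yielding $\begin{bmatrix}\mathbf{O}_T \\ \eb_i^\top\end{bmatrix} = \widetilde{\Pi}\begin{bmatrix}\mathbf{O}_{ob} \\ \eb_i^\top\end{bmatrix}$ with $\widetilde{\Pi} = \begin{bmatrix}\Ib_{mn} & 0 \\ L & 0 \\ 0 & 1\end{bmatrix}$, which is still full-column-rank by its block-triangular structure. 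Right-multiplying by $\Eb_{\bar{\mathrm{P}}}$ and reusing the rank-preservation principle yields c$^\prime$) $\Leftrightarrow$ c$^\dag$), and hence c) $\Leftrightarrow$ c$^\dag$). I do not anticipate a serious obstacle here: the entire argument is linear-algebraic bookkeeping, and the only step requiring a bit of care is verifying that $\Pi$ and $\widetilde{\Pi}$ retain full column rank under the augmentation, which follows transparently from their block-triangular form.
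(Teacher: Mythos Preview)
Your proposal is correct and follows essentially the same approach as the paper: the paper also factors $\mathbf{O}_T = \mathbf{Q}_T\,\mathbf{O}_{ob}$ with $\mathbf{Q}_T$ full-column-rank (your $\Pi$) and uses the resulting rank-preservation identity twice for b) $\Leftrightarrow$ b$^\dag$). For c) $\Leftrightarrow$ c$^\dag$) the paper works directly with the $\Eb_{\mathrm{P}}^\top$-augmented matrices rather than detouring through c$^\prime$) and your augmented $\widetilde{\Pi}$, but this is a cosmetic difference---both routes amount to the same block-triangular rank bookkeeping.
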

\begin{proof}
  We observe that $\mathbf{O}_{T}=\mathbf{Q}_{T}\mathbf{O}_{ob}$ with $\mathbf{Q}_{T}\in\mathbb{R}^{m(T+1)\times mn}$ being full-column-rank, which yields
  \beeq{\label{eq:rank_OE}
  \rank(\mathbf{O}_{T}\Eb_{\bar{\mathrm{P}}}) = \rank(\mathbf{O}_{ob}\Eb_{\bar{\mathrm{P}}})\,.
  }
  Namely,
  \[
  \rank([\Kb_{\bar p_1},\ldots,\Kb_{\bar p_{n-l}}]) = \rank([\Kb_{\bar p_1}^{\textnormal ob},\ldots,\Kb_{\bar p_{n-l}}^{\textnormal ob}])\,.
   \]
  Similarly, it can be verified that $\rank([\Kb_{i_1},\ldots,\Kb_{i_{n-l-1}}]) = \rank([\Kb_{i_1}^{\textnormal ob},\ldots,\Kb_{i_{n-l-1}}^{\textnormal ob}])$. Therefore, the statements b) and  b$^\dag$) are equivalent.

  To show the equivalence between statements c) and  c$^\dag$), we observe that
  \[\ba{l}
  \vspace{2mm}
  \rank\left(\begin{bmatrix}  \mathbf{O}_{ob}\cr \Eb_{\mathrm{P}}^\top\end{bmatrix} \right) = \rank\left(\begin{bmatrix}  \mathbf{O}_{T}\cr \Eb_{\mathrm{P}}^\top \end{bmatrix} \right) = \rank\left(\mathbf{O}_{T}\Eb_{\bar {{\mathrm{P}}} }\right)+l\,\,\\
  \vspace{2mm}
  \rank\left(\begin{bmatrix}  \mathbf{O}_{ob}\cr \Eb_{\mathrm{P}}^\top \cr \eb_i^\top\end{bmatrix} \right) = \rank\left(\begin{bmatrix}  \mathbf{O}_{T}\cr \Eb_{\mathrm{P}}^\top \cr \eb_i^\top\end{bmatrix} \right)\\ \qquad \qquad \qquad \,\quad = \rank\left(\begin{bmatrix}  \mathbf{O}_{T}\cr \eb_i^\top \end{bmatrix}\Eb_{\bar {{\mathrm{P}}} }\right)+l\,.
  \ea\]
  This thus establishes the equivalence between statements  $c)$ and c$^\dag$). $\blacksquare$
\end{proof}

With the above lemma in mind, we now proceed to show $a)\Longleftrightarrow b^\dag)$ and $b)\Longleftrightarrow c^\dag)$, which in turn will complete the proof.

\medskip
\noindent
{$a)\Longrightarrow b^\dag)$.} Given any node $i\in\V$, we now proceed to show, if the initial-value privacy of node $i$ is preserved w.r.t. ${{\mathrm{P}}}$,  $\Kb_{i}\in \range([\Kb_{i_1},\Kb_{i_{2}},\ldots,\Kb_{i_{n-l-1}}])$. The contradiction method is used by supposing there exists no vector $\Gamma\in\R^{n-l-1}$ such that $[\Kb_{i_1},\Kb_{i_2},\ldots,\Kb_{i_{n-l-1}}]\Gamma=\Kb_{i}$. This, in turn, implies there exists a vector $\Db_i\in\R^{m(T+1)}$ such that $\Db_i^\top\Kb_{i}\neq 0$ and $\Db_i^\top\Kb_{i_j} = 0$ for all $j=1,\ldots,n-l-1$. Thus, given any initial condition $\xb_0=[x_{1,0};\ldots;x_{n,0}]$, we have
\[
\Db_i^\top\mathbf{Y}_T = \sum\limits_{j\in{{\mathrm{P}}}}\Db_i^\top\Kb_j x_{j,0} + \Db_i^\top\Kb_i x_{i,0} + \Db_i^\top(\mathbf{H}_T \mathbf{V}_T + \mathbf{W}_T)\,,
\]
which implies
\[\ba{rcl}
\vspace{2mm}
x_{i,0} &=& (\Db_i^\top\Kb_i)^{-1}\Db_i^\top\mathbf{Y}_T-\sum\limits_{j\in{{\mathrm{P}}}}\Db_i^\top\Kb_j x_{j,0} \\ &&-(\Db_i^\top\Kb_i)^{-1}\Db_i^\top\mathbf{H}_T \mathbf{V}_T - (\Db_i^\top\Kb_i)^{-1}\Db_i^\top\mathbf{W}_T\,.
\ea\]
It is clear that the initial condition $x_{i,0}$ of node $i$ is identifiable for system (\ref{eq:ini-sys}) w.r.t. ${{\mathrm{P}}}$, which contradicts with the fact that $x_{i,0}$ is private. Therefore, we conclude that $\Kb_{i}\in \range([\Kb_{i_1},\Kb_{i_{2}},\ldots,\Kb_{i_{n-l-1}}])$, i.e., $a)$ implies $b^\prime)$.

\medskip
\noindent
{$b^\dag)\Longrightarrow a)$.}
Clearly, if
\[
\rank([\Kb_{i},\Kb_{i_1},\ldots,\Kb_{i_{n-l-1}}]) = \rank([\Kb_{i_1},\ldots,\Kb_{i_{n-l-1}}])\,,
 \]
then there exists a vector $\Gamma= [\gamma_1;\ldots;\gamma_{n-l-1}]$ such that $\Kb_{i}=\sum\limits_{j=1}^{n-l-1}\gamma_j\Kb_{i_j}$. Given any $\xb_0=[x_{1,0};\ldots;x_{n,0}]$ and $\xb^\prime_0=[x_{1,0}^\prime;\ldots;x_{n,0}^\prime]$ with $x_{i,0}\neq x_i^\prime(0)$, $x_{j,0}=x_{j,0}^\prime$ for $j\in{{\mathrm{P}}}$, and $x_{i_j,0}^\prime$, $j=1,\ldots,n-l-1$ satisfying
\[
x_{i_j,0}^\prime = x_{i_j,0} + \gamma_j(x_{i,0}-x_{i,0}^\prime)\,,
\]
we can obtain
\[\ba{l}
\mathbf{Y}_T = \sum\limits_{j=1}^n\Kb_j x_{j,0} + \mathbf{H}_T \mathbf{V}_T + \mathbf{W}_T\,\\
= \sum\limits_{j\in\mathrm{P}}\Kb_j x_{j,0}  + \Kb_i x_{i,0} + \sum\limits_{j=1}^{{n-l-1}}\Kb_{i_j} x_{i_j,0} + \mathbf{H}_T \mathbf{V}_T + \mathbf{W}_T\\
= \sum\limits_{j\in\mathrm{P}}\Kb_j x_{j,0}^\prime  + \Kb_i x_{i,0}^\prime + \sum\limits_{j=1}^{n-l-1}\gamma_j\Kb_{i_j}(x_{i,0}-x_{i,0}^\prime)\,   +  \sum\limits_{j=1}^{{n-l-1}}\Kb_{i_j} x_{i_j,0}^\prime - \sum\limits_{j=1}^{{n-l-1}}\Kb_{i_j}\gamma_j(x_{i,0}-x_{i,0}^\prime)  + \mathbf{H}_T \mathbf{V}_T + \mathbf{W}_T\,\\
=\sum\limits_{j=1}^n\Kb_j x_{j,0}^\prime + \mathbf{H}_T \mathbf{V}_T + \mathbf{W}_T\,.
\ea\]
Thus, given any nonzero $\eta_i$, we let $\eta_j=0$ for all $j\in{{\mathrm{P}}}$ and $\eta_{i_j}=\gamma_j\eta_i$ for all $j=1,\ldots,n-l-1$, we have
\begin{equation}\label{eq:IPP-eta}
  \pdf\left(\mathbf{Y}_T| \xb_0\right) = \pdf\left(\mathbf{Y}_T | \xb_0 + \etab\right)\,
\end{equation}
with $\etab:=\col(\eta_1,\ldots,\eta_n)$.
This, according to Definition \ref{Definition:LIPP}, proves $b^\dag)\Longrightarrow a)$.

\medskip
\noindent
{$b^\dag)\Longleftrightarrow c^\dag)$.} The equivalence between $b^\dag)$ and $c^\dag)$ can be easily inferred by using the facts that
\[\ba{l}
\vspace{2mm}
\rank\left(\begin{bmatrix}  \mathbf{O}_{T} \cr \eb_i^\top \end{bmatrix}\Eb_{\bar {{\mathrm{P}}} }\right) = \rank\left(\begin{bmatrix} \Kb_i &\Kb_{i_1}&\ldots&\Kb_{i_{n-l-1}} \cr 1 & 0 & \cdots & 0 \end{bmatrix}\right)\,\\
=\rank\left([\Kb_{i_1},\ldots,\Kb_{i_{n-l-1}}]\right) + 1\,.
\ea\]

\section{Proof of Proposition \ref{Proposition-NPI}}

The proof is completed if the following two statements are proved.
\begin{itemize}
  \item[(i)]  Given any ${{\mathrm{P}}}\subset\mathrm{V} $ with $|{{\mathrm{P}}}|=n-\rank(\mathbf{O}_{ob})-1$, there exists a node whose initial value is private.
  \item[(ii)] There exists a set ${{\mathrm{P}}}\subset\mathrm{V} $ with $|{{\mathrm{P}}}|=n-\rank(\mathbf{O}_{ob})$ such that initial values of all  nodes are identifiable.
\end{itemize}

\medskip
\noindent
\emph{Proof of part (i).} Given any ${{\mathrm{P}}}\subset\mathrm{V} $ with $|{{\mathrm{P}}}|:=l=n-\rank(\mathbf{O}_{ob})-1$, it is clear that the matrix
$
[\Kb_{\bar p_1}^{ob},\Kb_{{\bar p_2}}^{ob},\ldots,\Kb_{\bar p_{n-l}}^{ob}]
$
is not full-column-rank, which indicates
there exits an $i=p_j\in\{\bar p_1,\bar p_2,\ldots,\bar p_{n-l}\}$ such that
\[
\Kb_{i}^{ob}\in \range([\Kb_{\bar p_1}^{ob},\ldots,\Kb_{\bar p_{j-1}}^{ob},\Kb_{\bar p_{j+1}}^{ob},\ldots,\Kb_{\bar p_{n-l}}^{ob}])\,.
\]
According to Theorem \ref{Theorem-LIPP}, this yields that the initial value of node $i$ is private.

\medskip
\noindent
\emph{Proof of part (ii).} Let $r=\rank(\mathbf{O}_{ob})$ and $l=n-r$. Then, select $\bar p_j\in\V $, $j=1,\ldots,r$  such that  $\rank([\Kb_{\bar p_1}^{ob},\Kb_{\bar p_2}^{ob},\ldots,\Kb_{\bar p_r}^{ob}])=r$. Thus let ${{\mathrm{P}}}=\mathrm{V} \backslash\{\bar p_1,\bar p_2,\ldots,\bar p_r\}$. According to Theorem \ref{Theorem-LIPP}, it can be concluded that the initial values of all nodes $i\in\{\bar p_1,\bar p_2,\ldots,\bar p_r\}=\mathrm{V}\backslash{{\mathrm{P}}}$ are identifiable.

\medskip
In summary of the previous analysis, we conclude that ${\mathbf I_{rp}} = n-\rank(\mathbf{O}_{ob})-1$ for system (\ref{eq:ini-sys}).

\section{Proof of Theorem 3}
\label{app-theo-3}

To ease the subsequent analysis, we collect all edge weights $a_{ij},c_{ij}$ in a configuration vector $\theta\in\R^N$ with $N$ being the total number of edges in $(\G,\G_{\textnormal S})$. In this way, all matrices $\Ab,\Cb$ are indeed functions of $\theta$, and so is the resulting observability matrix $\mathbf{O}_{ob}$.


Instrumental to the proof is the following lemma.
\medskip
\begin{lemma}\label{lemma-1}
There exists a  $n_{\rm P}^{ob}\leq n-l$ such that
   \begin{equation}\label{eq:E1}
   \rank(\mathbf{O}_{ob}(\theta)\Eb_{\bar{\mathrm P}}) = n_{\rm P}^{ob}\,, \quad \mbox{for almost all $\theta\in\R^N$}
   \end{equation}
   \begin{equation}\label{eq:E2}
     \rank(\mathbf{O}_{ob}(\theta)\Eb_{\bar{\mathrm P}}) \leq n_{\rm P}^{ob}\,,\quad \mbox{for all $\theta\in\R^N$}.\qquad\,\,\,\,
   \end{equation}
\end{lemma}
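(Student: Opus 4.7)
The plan is to reduce the claim to the classical fact that the generic rank of a polynomial matrix is attained outside a proper algebraic set of Lebesgue measure zero. First I would collect all edge weights $a_{ij},c_{ij}$ of $(\G,\G_{\textnormal S})$ into a single parameter vector $\theta\in\R^N$ (as is already introduced in the setup of the theorem), so that the entries of $\Ab(\theta)$ and $\Cb(\theta)$ are either identically $0$ or a single coordinate of $\theta$. Then every entry of $\Cb(\theta)\Ab(\theta)^k$ is a polynomial in $\theta$, hence every entry of $\mathbf{O}_{ob}(\theta)\Eb_{\bar{\mathrm P}}$ is a polynomial in $\theta$, and $\mathbf{O}_{ob}(\theta)\Eb_{\bar{\mathrm P}}$ is an $mn\times(n-l)$ polynomial matrix in $\theta$.

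Next I would define
\[
n_{\rm P}^{ob} \;:=\; \max_{\theta\in\R^N}\rank\bigl(\mathbf{O}_{ob}(\theta)\Eb_{\bar{\mathrm P}}\bigr).
\]
The maximum exists because the rank takes values in the finite set $\{0,1,\dots,n-l\}$, and the bound $n_{\rm P}^{ob}\leq n-l$ is immediate from the column count of $\Eb_{\bar{\mathrm P}}$. Equation (\ref{eq:E2}) then holds tautologically.

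For (\ref{eq:E1}), let $\theta^{\ast}$ be a configuration at which the maximum is attained. Then there is an $n_{\rm P}^{ob}\times n_{\rm P}^{ob}$ submatrix of $\mathbf{O}_{ob}(\theta^{\ast})\Eb_{\bar{\mathrm P}}$ with nonzero determinant. Regarded as a function of $\theta$, this determinant is a polynomial $p(\theta)$, and $p(\theta^{\ast})\neq 0$, so $p$ is not identically zero. Invoking the standard fact that the zero set of a nonzero real polynomial in $\R^N$ has Lebesgue measure zero (an easy induction on $N$), the exceptional set $\{\theta:p(\theta)=0\}$ is negligible. For every $\theta$ outside this set we have $\rank\bigl(\mathbf{O}_{ob}(\theta)\Eb_{\bar{\mathrm P}}\bigr)\geq n_{\rm P}^{ob}$, which together with (\ref{eq:E2}) forces equality, establishing (\ref{eq:E1}).

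The main obstacle is entirely conceptual rather than computational: one only needs to recognise the polynomial structure of $\mathbf{O}_{ob}(\theta)\Eb_{\bar{\mathrm P}}$ in $\theta$ and to cite the measure-zero property of zero sets of nonzero polynomials. No explicit computation of minors or combinatorial graph-theoretic analysis is required at this stage; those aspects enter later when one actually wishes to compute $n_{\rm P}^{ob}$ from the topology.
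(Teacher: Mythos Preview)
Your argument is correct and rests on the same core fact as the paper's proof, namely that a nonzero polynomial in $\theta$ vanishes only on a set of Lebesgue measure zero. The difference is in the choice of witness polynomial: the paper considers the characteristic polynomial $\det\bigl(s\Ib-\Phi(\theta)^\top\Phi(\theta)\bigr)=\sum_{j=1}^{n-l}\bar\alpha_j(\theta)s^{j-1}+s^{n-l}$ with $\Phi=\mathbf{O}_{ob}\Eb_{\bar{\mathrm P}}$ and runs a recursion over $k=1,2,\ldots$ to find the first coefficient $\bar\alpha_k$ that is not identically zero, which pins down $n_{\rm P}^{ob}=n-l-k+1$; you instead pick a single $n_{\rm P}^{ob}\times n_{\rm P}^{ob}$ minor that is nonzero at the maximising $\theta^{\ast}$ and use its determinant directly. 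Your route is shorter and more elementary, since it avoids the characteristic-polynomial bookkeeping and the recursive search; the paper's formulation, on the other hand, packages the rank information into the single family $\{\bar\alpha_k\}$, which is convenient later when the same coefficients are reused in the proof of Theorem~\ref{Theorem-GIVP}.
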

\begin{proof}
  The proof of this lemma is to  find the maximal value of  $\mbox{rank}(\Phi(\theta))$ with $\Phi(\theta):=\mathbf{O}_{ob}(\theta)\Eb_{\bar{\mathrm P}}$. Let $\bar\alpha_j(\theta):\R^N\rightarrow\R$, $j=1,\ldots,n-l$ be such that
\[
\det\left(s\Ib - \Phi(\theta)^\top\Phi(\theta)\right)
= \sum_{j=1}^{n-l}\bar\alpha_j(\theta) s^{j-1} + s^{n-l}\,.
\]
We run the following recursive algorithm from $k=1$ until $n_{\rm P}^{ob}$ is found.

\medskip
{\bf Step} $k${\bf :} Check whether there exists $\theta^\prime\in\R^N$ such that $\bar\alpha_k(\theta^\prime)\neq 0$. If so, using the fact that analytic functions that are not identically zero vanish only on a zero-measure set, we can conclude that  $\bar\alpha_k(\theta)\neq 0$ holds for almost all $\theta\in\R^N$. This, together with the fact that $\bar\alpha_j(\theta)=0$ for all $j\leq k-1$ and all $\theta\in\R^N$, indicates that (\ref{eq:E1}) and (\ref{eq:E2}) hold with $n_{\rm P}^{ob}=n-l-k+1$. Otherwise, if for all $\theta\in\R^N$, $\bar\alpha_k(\theta)= 0$, we then  proceed to Step $k+1$.

\medskip
If at the $n-l$-th recursion of the above algorithm, we still cannot find a $\theta\in\R^N$ such that $\bar\alpha_{n-l}(\theta)\neq 0$, we then can conclude that $n_{\rm P}^{ob} =0$. $\blacksquare$
\end{proof}

With this lemma, we now proceed to prove the theorem. Let $\Theta_{1}\subseteq\R^N$ be a set of configuration vector $\theta$ such that the intrinsic initial-value privacy of node $i$ is preserved for all $\theta\in\Theta_{1}$ and lost for all $\theta\in\R^N\backslash \Theta_{1}$.
It is clear that the proof is done if we show that either $\Theta_{1}$ or $\R^N\backslash \Theta_{1}$ is zero-measure. To prove it, we use the contradiction method, and assume that there exists a nonzero-measure set $\Theta_1\in\R^N$ of configuration vector $\theta$ such that
\begin{itemize}
    \item[($P1$)] the set $\R^N\backslash\Theta_{1}$ is nonzero-measure, and
    \item[($P2$)] the intrinsic initial-value privacy of node $i$ is preserved only for  $\theta\in\Theta_1$ under $(\G ,\G_{\textnormal S})$, and
    \item[($P3$)] the intrinsic initial-value privacy of node $i$ is lost  for  $\theta\in\R^N\backslash\Theta_{1}$ under $(\G,\G_{\textnormal S})$.
  \end{itemize}
Then, according to Theorem \ref{Theorem-LIPP} and Remark \ref{Remark-9}, it can be inferred that
\[\ba{l}
\vspace{1mm}
(P2)
\Longleftrightarrow 
\rank\begin{bmatrix}
                                 \mathbf{O}_{ob}(\theta)\Eb_{\bar{\mathrm P}}\\
                                 \eb_i^\top \Eb_{\bar{\mathrm P}}
                               \end{bmatrix}
                              = \rank\left(\mathbf{O}_{ob}(\theta)\Eb_{\bar{\mathrm P}}\right) +1 \,, \mbox{for all } \theta\in\Theta_1\,.
\ea\]
Let $\alpha_j(\theta):\R^N\rightarrow\R$, $j=1,\ldots,n$ be such that
\[\ba{l}
\det\left(s\Ib - \begin{bmatrix}
                                  \mathbf{O}_{ob}(\theta)\Eb_{\bar{\mathrm P}}\\
                                 \eb_i^\top\Eb_{\bar{\mathrm P}}
                               \end{bmatrix}^\top\begin{bmatrix}
                                 \mathbf{O}_{ob}(\theta)\Eb_{\bar{\mathrm P}} \\\eb_i^\top\Eb_{\bar{\mathrm P}}
                               \end{bmatrix}\right)
 = \sum\limits_{j=1}^{n-l}\alpha_j(\theta) s^{j-1} + s^{n-l}\,.
\ea\]

By  Lemma \ref{lemma-1},  there exists a nonzero-measure set $\Theta_{ob}\subseteq\R^N$  such that  the set $\R^N\backslash\Theta_{ob}$ is zero-measure, and for all $\theta\in\Theta_{ob}$, $\rank(\mathbf{O}_{ob}(\theta)\Eb_{\bar{\mathrm P}}) = n_{\rm P}^{ob}$. Besides, it is clear that, for all $\theta\in\R^N$
\beeq{\label{eq:app-E-lower}
\rank\left(\begin{bmatrix}
\mathbf{O}_{ob}(\theta)\Eb_{\bar{\mathrm P}}\\
\eb_i^\top\Eb_{\bar{\mathrm P}}
\end{bmatrix}\right)
                               \leq  \rank\left(\mathbf{O}_T(\theta)\Eb_{\bar{\mathrm P}}\right)+1\,.
}

Since $\Theta_{1}$ is nonzero-measure and $\R^N\backslash\Theta_{ob}$ is zero-measure, we have $\Theta_{1}\cap\Theta_{ob}\neq \emptyset$. Thus letting $\theta^\ast\in\Theta_{1}\cap\Theta_{ob}$ yields that  $\rank(\mathbf{O}_{ob}(\theta^\ast)\Eb_{\bar{\mathrm P}}) = n_{\rm P}^{ob}$ and
\[\ba{rcl}
\rank\begin{bmatrix}
                                 \mathbf{O}_{ob}(\theta^\ast)\Eb_{\bar{\mathrm P}}\\
                                 \eb_i^\top \Eb_{\bar{\mathrm P}}
                               \end{bmatrix}
                              &=& \rank(\mathbf{O}_{ob}(\theta^\ast)\Eb_{\bar{\mathrm P}})+ 1\,\\
                              &=& n_{\rm P}^{ob}+1\,.
\ea\]
This then implies $\alpha_{n_{uo}}(\theta^\ast)\neq 0$ with $n_{uo}=n-l-n_{\rm P}^{ob}$.
Note that analytic functions that are not identically zero vanish only on a zero-measure set. This indicates that there is a nonzero-measure set $\Theta_2\subseteq\R^N$ of configuration vector $\theta$ such that
\begin{itemize}
    \item[($P4$)] the set $\R^N\backslash\Theta_{2}$ is zero-measure, and
    \item[($P5$)] the inequality $\alpha_{n_{uo}}(\theta)\neq 0$ holds for all $\theta\in\Theta_2$.
  \end{itemize}
Thus, for all $\theta\in\Theta_{ob}\cap\Theta_{2}$, we have
\[
\rank\left(\begin{bmatrix}
                                 \mathbf{O}_{ob}(\theta)\Eb_{\bar{\mathrm P}} \\
                                 \eb_i^\top \Eb_{\bar{\mathrm P}}
                               \end{bmatrix}\right)
                               \geq n_{\rm P}^{ob}+1 \,= \rank\left(\mathbf{O}_T(\theta)\Eb_{\bar{\mathrm P}}\right)+1\,,
\]
which, together with (\ref{eq:app-E-lower}), yields
\[
\rank\left(\begin{bmatrix}
                                 \mathbf{O}_{ob}(\theta)\Eb_{\bar{\mathrm P}}\\
                                 \eb_i^\top \Eb_{\bar{\mathrm P}}
                               \end{bmatrix}\right)
                               =  \rank\left(\mathbf{O}_T(\theta)\Eb_{\bar{\mathrm P}}\right)+1\,
\]
for all $\theta\in \Theta_{ob}\cap\Theta_{2}$. According to Theorem \ref{Theorem-LIPP} and Remark \ref{Remark-9}, this implies that the intrinsic initial-value privacy of node $i$ is preserved for all configuration vector $\theta\in \Theta_{ob}\cap\Theta_{2}$. Then by ($P2$) and ($P3$), it immediately follows that $(\R^N\backslash\Theta_{1}) \subseteq \R^N\backslash(\Theta_{ob}\cap\Theta_{2})$, where the set $\R^N\backslash(\Theta_{ob}\cap\Theta_{2})=(\R^N\backslash\Theta_{ob})\cup(\R^N\backslash\Theta_{2})$ is zero-measure. This indicates that $\R^N\backslash\Theta_{1}$ is zero-measure, which contradicts with ($P1$), and thus completes the proof.

\section{Proof of Theorem \ref{Theo-verify}}

Recalling Theorem \ref{Theorem-LIPP} and Remark \ref{Remark-9}, we can easily see that $(C1), (C2)$ and $(C3)$ are equivalent. Thus, the proof is done if the following two statements are proved.
\begin{itemize}
  \item[($S1$)] If the condition ($C1$) holds, then the intrinsic initial-value privacy of node $i$ is preserved generically.
  \item[($S2$)] If the intrinsic initial-value privacy of node $i$ is preserved generically, then the condition ($C1$) holds for almost all configurations complying with $\G,\G_{\rm P}$.
\end{itemize}

\medskip
\noindent
\emph{Proof of ($S1$).}
By the condition (C1), there exists a $\theta^\ast\in\R^N$ such that
\[
\rank\left(\begin{bmatrix}  \mathbf{O}_{ob}(\theta^\ast) \cr \eb_i^\top \end{bmatrix}\Eb_{\bar {{\mathrm{P}}} }\right) = n_{\rm P}^{ob} + 1\,.
\]
Following the notations in section \ref{app-theo-3}, we can obtain
\[
\alpha_{n_{uo}}(\theta^\ast)\neq 0
\]
with $n_{uo}=n-l-n_{\rm P}^{ob}$.  Therefore, according to the standard arguments,  there exists a nonzero-measure set $\Theta_3\subseteq\R^N$  such that the set $\R^N\backslash\Theta_{3}$ is zero-measure, and the inequality $\alpha_{n_{uo}}(\theta)\neq 0$ holds for all $\theta\in\Theta_3$. This yields
\[
\rank\left(\begin{bmatrix}  \mathbf{O}_{ob}(\theta) \cr \eb_i^\top \end{bmatrix}\Eb_{\bar {{\mathrm{P}}} }\right) \geq n_{\rm P}^{ob} + 1\, , \quad\mbox{for all $\theta\in\Theta_3$}\,.
\]
Therefore, with (\ref{eq:E2}) we have
\beeq{\label{eq:app}
\rank\left(\begin{bmatrix}  \mathbf{O}_{ob}(\theta) \cr \eb_i^\top \end{bmatrix}\Eb_{\bar {{\mathrm{P}}} }\right) = n_{\rm P}^{ob} + 1\,
}
for all $\theta\in\Theta_3$. Since $\rank(\mathbf{O}_{ob}(\theta)\Eb_{\bar{\mathrm P}}) = n_{\rm P}^{ob}$ for all $\theta\in\Theta_{ob}$ with $\R^N\backslash\Theta_{ob}$ being zero-measure by Lemma \ref{lemma-1}, we thus obtain
\beeq{\label{eq:app-0}
\rank\left(\begin{bmatrix}  \mathbf{O}_{ob}(\theta) \cr \eb_i^\top \end{bmatrix}\Eb_{\bar {{\mathrm{P}}} }\right) = \rank(\mathbf{O}_{ob}(\theta)\Eb_{\bar{\mathrm P}}) + 1\,
}
for all $\theta\in \Theta_3 \cap \Theta_{ob}$. By Theorem \ref{Theorem-LIPP} and Remark \ref{Remark-9}, this implies that the intrinsic initial-value privacy of node $i$ is preserved for all configurations $\theta\in \Theta_3 \cap \Theta_{ob}$. Note that $\R^N\backslash(\Theta_3 \cap \Theta_{ob})=\R^N\backslash\Theta_3 \cup \R^N\backslash \Theta_{ob}$ is zero-measure, which proves ($S1$).

\medskip
\noindent
\emph{Proof of ($S2$).}
Suppose the intrinsic initial-value privacy of node $i$ is preserved generically. By Theorem \ref{Theorem-LIPP} and Remark \ref{Remark-9}, this implies there exists a  nonzero-measure set $\Theta_{4}\subseteq\R^N$  such that  the set $\R^N\backslash\Theta_{4}$ is zero-measure, and for all $\theta\in\Theta_{4}$, (\ref{eq:app-0}) holds.

Recalling the fact that $\rank(\mathbf{O}_{ob}(\theta)\Eb_{\bar{\mathrm P}}) = n_{\rm P}^{ob}$ for all $\theta\in\Theta_{ob}$, we have
\beeq{\label{eq:app}
\rank\left(\begin{bmatrix}  \mathbf{O}_{ob}(\theta) \cr \eb_i^\top \end{bmatrix}\Eb_{\bar {{\mathrm{P}}} }\right) = n_{\rm P}^{ob} + 1\,
}
for all $\theta\in \Theta_{ob}\cap \Theta_{4}$, where $\R^N\backslash(\Theta_{ob}\cap \Theta_{4})$ is zero-measure. This thus proves ($S2$).

\section{Proof of Theorem \ref{Theorem-GNPI}}

Let $\hat\alpha_j(\theta):\R^N\rightarrow\R$, $j=1,\ldots,n$ be such that
\[
\det\left(s\Ib - \mathbf{O}_{ob}(\theta)^\top\mathbf{O}_{ob}(\theta)\right)
= \sum_{j=1}^{n}\hat\alpha_j(\theta) s^{j-1} + s^{n}\,.
\]
Since the maximal rank of $\mathbf{O}_{ob}(\theta)$ is $n_{ob}^{g}$, it immediately follows that
there exists a $\theta^\ast\in\R^N$ such that $\hat\alpha_{n+1-n_{ob}^{g}}(\theta^\ast)\neq 0$, and $\hat\alpha_j(\theta)=0$ for all $\theta\in\R^N$ and $j\leq n-n_{ob}^{g}$. Note that analytic functions that are not identically zero vanish only on a zero-measure set. This indicates that there is a nonzero-measure set $\Theta_5\subseteq\R^N$ of configuration vector $\theta$ such that the set  $\R^N\backslash\Theta_{5}$ is zero-measure, and the inequality $\hat\alpha_{n+1-n_{ob}^{g}}(\theta)\neq 0$ holds for all $\theta\in\Theta_5$. Thus, we have $\rank(\mathbf{O}_{ob}(\theta))=n_{ob}^{g}$ holds for all $\theta\in\Theta_5$. Recalling that $\R^N\backslash\Theta_{5}$ is zero-measure, this indeed proves that $\rank(\mathbf{O}_{ob}(\theta))=n_{ob}^{g}$ holds for almost all  $\theta\in\R^N$.

With this in mind, we further combine with Proposition \ref{Proposition-NPI} and find that for all configuration $\theta\in\Theta_5$, the resulting network privacy index ${\mathbf I_{rp}} = n - n_{ob}^{g} -1$. Recalling that $\R^N\backslash\Theta_{5}$ is zero-measure, one then can conclude that the network privacy index ${\mathbf I_{rp}} = n - n_{ob}^{g} -1$ holds for almost all configurations $\theta\in\R^N$. This completes the proof.

\end{document}